\tikzstyle directed=[postaction={decorate,decoration={markings,
		mark=at position .65 with {\arrow{latex}}}}]
\newtheorem{theorem}{Theorem} [section]
\newtheorem{proposition}[theorem]{Proposition}	
\newtheorem{lemma}[theorem]{Lemma}
\newtheorem{pb}[theorem]{Riemann-Hilbert Problem}
\newtheorem{remark}[theorem]{Remark}
\theoremstyle{definition}
\newtheorem{definition}[theorem]{Definition}
\DeclareMathOperator{\tr}{Tr}
\newcommand{\R}{\mathbb{R}}
\newcommand{\Ai}{{\rm Ai}}
\def\XXint#1#2#3{{\setbox0=\hbox{$#1{#2#3}{\int}$}
		\vcenter{\hbox{$#2#3$}}\kern-.5\wd0}}
\tikzset{->-/.style={decoration={
			markings,
			mark=at position #1 with {\arrow{latex}}},postaction={decorate}}}
\tikzset{-<-/.style={decoration={
			markings,
			mark=at position #1 with {\arrowreversed{latex}}},postaction={decorate}}}
\tikzset{cross/.style={cross out, draw, 
		minimum size=2*(#1-\pgflinewidth), 
		inner sep=0pt, outer sep=0pt}}
\numberwithin{equation}{section}
\def\ds{\displaystyle}
\newcommand{\be}{\begin{equation}}
	\newcommand{\ee}{\end{equation}}
\def\e{{\epsilon}}
\def\ds{\displaystyle}
\def\Ai{{\rm Ai \,}}
\def\I{{\rm I \,}}
\def\P2n{{\rm P}_{{\rm II}}^{(n)}}
\def\R{\mathbb{R}}
\def\i{ {\mathrm{i}}}
\def\e{{\mathrm{e}}}
\def\1{\mathbf{1}}
\def\R{\mathbb{R}}
\def\d{\mathrm{d}}
\def\Ai{\mathrm{Ai}}
\def\x{{\vec{x}}}
\def\diag{\mathrm{diag}}
\def\I{\mathrm{I}}
\def\sgn{{\mathrm{sgn}}}
\begin{document}
	\title{The Riemann-Hilbert approach to the generating function of the higher order Airy point processes}
	\author[1]{Mattia Cafasso}
	\author[2]{Sofia Tarricone}
	\renewcommand\Affilfont{\small}
	\affil[1]{\textit{LAREMA, UMR 6093, UNIV Angers, CNRS, SFR Math-STIC, France;} \texttt{cafasso@math.univ-angers.fr}}
	\affil[2]{\textit{Institut de Recherche en Math\'ematique et Physique,  UCLouvain, Chemin du Cyclotron 2, B-1348 Louvain-la-Neuve, Belgium;} \texttt{sofia.tarricone@uclouvain.be}}

	\date{}
	\maketitle
	
	\begin{abstract}
		We prove a Tracy-Widom type formula for the generating function of occupancy numbers on several disjoint intervals of the higher order Airy point processes. The formula is related to a new vector-valued Painlevé II hierarchy we define, together with its Lax pair.
		
	\end{abstract}
	
	
	\section{Introduction}
	
	Let us consider the higher order Airy functions
	
	\begin{equation}
		\Ai_n(x) := \frac{1}{\pi} \int_{0}^\infty \cos\left(\frac{y^{2n + 1}}{2n + 1} + xy \right) \mathrm{d}y,\,  \quad x \in \mathbb R, \, n \in \mathbb N,
	\end{equation}
	and the associated kernels
	\begin{equation}\label{eqintro:kernel}
		K_n(x,y) := \int_{0}^\infty \Ai_n(x + z)\Ai_n(y + z)\mathrm{d}z.
	\end{equation}
	It is easy to prove, using standard arguments in the the theory of point processes (Theorem 3 in \cite{Sosh:RandomPointFields}) that the kernels $K_n$, for any $n \geq 1$, define a determinantal point process whose correlation functions are given by the standard formula
	$$
	\rho_{\ell;n}(x_1,\ldots,x_\ell) := \det \Bigg(K_n(x_i,x_j)\Bigg)_{i,j = 1}^\ell \quad \ell \geq 1,
	$$ 
	see Appendix A in \cite{CafassoClaeysGirotti}. 
	The importance of these point processes stems from applications to statistical physics and combinatorics. Indeed, they are associated to new universality classes generalizing the KPZ one (case $n = 1$). These universality classes describe both the limiting behavior of the momenta of non--interacting fermions trapped in an anharmonic potential \cite{FermMM} and the one of multicritical random partitions \cite{BeteaBouttierWalsh,KimuraZahabi, KimuraZahabi2}.\\ 
	Let us denote with 
	$$
	\zeta_1^{(n)} > \zeta_2^{(n)}  > \zeta_3^{(n)}  > \ldots > \zeta_j^{(n)} > \ldots
	$$
	the (random) points in the process, fix a collection $\{A_j, j = 1, \ldots, k\}$ of intervals of the form
	$$
	A_j = (x_j,x_{j-1}), \quad \mathrm{with} \; + \infty =: x_0 > x_1 > \ldots > x_k > -\infty
	$$
	and some real constants $\alpha_1,\ldots,\alpha_k$ such that $\alpha_j \in [0,1)$ for all $j$. We will denote $$\#_{A_j}^{(n)} := \#\{\zeta_\ell^{(n)} | \zeta_\ell^{(n)} \in A_j ,\; \ell \geq 0\}$$ the random variable counting the number of points contained in the interval $A_j$. We are interested in studying the generating function
	\begin{equation}\label{def:genfunction}
		F_n(\vec{x},\vec{\alpha}) = F_n(x_1,\ldots,x_k;\alpha_1,\ldots,\alpha_k) := \mathbb E \left[\prod_{j = 1}^k (1 - \alpha_j)^{\#_{A_j}^{(n)}} \right],
	\end{equation}
	whose derivatives give the joint probability law of $k$ given particles in the process. More precisely, given $m_1 < \ldots < m_k$,
	\begin{equation}
		\mathbb P \left(\bigcap_{j = 1}^k \left(\zeta_{m_j}^{(n)} < x_j \right) \right) = \sum \frac{(-1)^{j_1 + \ldots j_k}}{j_1!j_2!\ldots j_k!} \frac{\partial^{j_1 + j_2 + \ldots +j_k}}{\partial \alpha_1^{j_1}\partial \alpha_2^{j_2}\ldots\partial \alpha_k^{j_k}}F_n(\vec x, \vec \alpha)\Big|_{\vec \alpha = (1,\ldots,1)},
	\end{equation}
	where the sum is taken over all indices $j_1,\ldots,j_k$ satisfying the conditions
	$$
	j_1 < m_1, \; j_1 + j_2 < m_2, \cdots, \sum_{\ell = 1}^k j_\ell < m_k.
	$$
	(see for instance \cite{BaikDeiftRains}).
	The main result of this paper is a Tracy--Widom formula for $F_n(\vec{x},\vec{\alpha})$, relating the latter to a vector--valued version of the Painlevé II hierarchy we are going to define. Our formula generalizes both the one obtained by Claeys and Doeraene \cite{ClaeysDoeraene} for the case $n = 1$ and arbitrary $k \geq 1$, and the one obtained by one of the authors, Claeys and Girotti for arbitrary $n \geq 1$  and $k = 1$ \cite{CafassoClaeysGirotti}.\\
	
	In order to state precisely our result, we need to introduce some (vector--valued) differential polynomials which will be used to define our hierarchy of equations. We will work with the ring $$\mathcal R := \mathbb C[u_1,\ldots,u_k,Du_1,\ldots, Du_k,D^2u_1,\ldots D^2 u_k,\ldots]$$ generated by $k$ functions $u_j : \mathbb R \ni t \mapsto u_j(t), \; j = 1,\ldots, k$ and its derivatives, and denote with $D^{-1}$ the left--inverse of the derivation,  such that $D^{-1}D v = v$ for all $v$ in $\mathrm{Im}(D)$. Given $\vec v,\vec w \in \mathcal R^k$, let us define 
	$$
	<\vec v, \vec w> := \vec v^\top \vec w \in \mathcal R, \quad \{\vec v,\vec w\} := \vec v \vec w^\top + \vec w \vec v^\top \in \mathrm{Mat}(k,\mathcal R), \quad [\vec v,\vec w] := \vec v \vec w^\top - \vec w \vec v^\top \in \mathrm{Mat}(k,\mathcal R).
	$$
	We will also denote $\vec u := (u_1,\ldots,u_k)^\top \in \mathcal R^k$.
	\begin{definition}
		Suppose that $\vec v \in \mathcal R^k$ is such that 
		$$<\vec u,\vec v>\, \in D(\mathcal R) \quad \text{and} \quad \{\vec u,\vec v\} \in D(\mathrm{Mat}(k,\mathcal R)).$$ We define
		$$
		\mathcal L^{\vec u}_+ \vec v := \i\, D \vec v - \i \left(D^{-1} \{\vec u,\vec v\}\right) \vec u - 2 \i \left(D^{-1}<\vec u, \vec v >\right) \vec u.
		$$
		Analogously, for any $\vec v$ such that $ [\vec u,\vec v] \in D(\mathrm{Mat}(k,\mathcal R))$, we define
		$$
		\mathcal L^{\vec u}_- \vec v := \i\, D \vec v + \i \left(D^{-1} [\vec u,\vec v]\right) \vec u.
		$$
	\end{definition}
	\begin{theorem}\label{thm:main}
		Let $F_n(\vec x,\vec \alpha)$ defined as in \eqref{def:genfunction} with $\alpha_j \neq \alpha_{j+1}$ for all $j \leq k-1$, and $\alpha_{k+1} \equiv 0$. Then
		\begin{equation}\label{TWformula}
			F_n(\vec x, \vec \alpha) =\exp\left( - \int_{0}^\infty t <\vec u(t),\vec u(t)> \mathrm d t \right),
		\end{equation}
		where $\vec u(t) = \vec u(n,\vec x + t,\vec \alpha)$ satisfy the following (vector-valued) ordinary differential equation
		\begin{equation}\label{PIIequationIntro}
			\Big(\mathcal L^{\vec u}_+\mathcal L^{\vec u}_-\Big)^n \vec u(t) = -\diag \big(x_1 + t,\ldots,x_k +t \big) \vec u(t)
		\end{equation}
		and have the following behavior at $+ \infty$
		\begin{equation}\label{asymptoticsIntro}
			\vec u(n, \vec x + t, \vec \alpha) = \Bigg(\sqrt{\alpha_j - \alpha_{j+1}}\Ai_n(t + x_j)(1 + o(1))\Bigg)_{j = 1,\ldots,k}.
		\end{equation}
	
	Moreover, if $\alpha_{j+1} < \alpha_j$ then $u_j(n,\vec x + t, \vec \alpha)$ is real-valued for real $t$. If $\alpha_{j+1} > \alpha_j$, then  $u_j(n,\vec x + t, \vec \alpha)$ is purely imaginary for real $t$.
	\end{theorem}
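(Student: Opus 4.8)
The plan is to realize $F_n$ as a Fredholm determinant and then run the Its--Izergin--Korepin--Slavnov/Deift--Zhou machinery in the multi--interval setting, generalizing the $k=1$ analysis of \cite{CafassoClaeysGirotti}. First I would use the determinantal structure of the process to write
\begin{equation}
F_n(\vec x,\vec\alpha)=\det\left(\mathrm{Id}-\sum_{j=1}^k\alpha_j\,\chi_{A_j}K_n\right)
\end{equation}
on $L^2$ of the union of the intervals. Since the $A_j=(x_j,x_{j-1})$ are nested, Abel summation (telescoping $\chi_{A_j}=\chi_{(x_j,\infty)}-\chi_{(x_{j-1},\infty)}$) rewrites the multiplication operator as $\sum_{j=1}^k(\alpha_j-\alpha_{j+1})\chi_{(x_j,\infty)}$ with $\alpha_{k+1}=0$; this is exactly where the amplitudes $\sqrt{\alpha_j-\alpha_{j+1}}$ of \eqref{asymptoticsIntro} originate. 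I then introduce a common translation $t$ via $\vec x\mapsto\vec x+t$, turning $F_n$ into a one--parameter family with $F_n\to 1$ as $t\to+\infty$ (all intervals escaping to the right).

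Next I would exploit that, through the oscillatory integral representation of $\Ai_n$ on its $(2n+1)$ rays, the operator $\sum_j(\alpha_j-\alpha_{j+1})\chi_{(x_j,\infty)}K_n$ is integrable in the sense of Its--Izergin--Korepin--Slavnov, so its resolvent is encoded in a $(k+1)\times(k+1)$ matrix Riemann--Hilbert problem $\Psi(\lambda)=\Psi(\lambda;\vec x+t,\vec\alpha)$: one distinguished spectral channel carrying the higher Airy structure and a $k$--dimensional block, one channel per endpoint $x_j$, whose jumps carry the constants $\alpha_j-\alpha_{j+1}$. I would then identify the vector $\vec u(t)$ with the off--diagonal block of the residue matrix $\Psi_1$ in the large--$\lambda$ expansion $\Psi(\lambda)=\left(\mathrm{Id}+\Psi_1/\lambda+\cdots\right)\Psi_{\mathrm{free}}(\lambda)$, so that $<\vec u,\vec u>$, $\{\vec u,\vec v\}$ and $[\vec u,\vec v]$ appear as the natural scalar, symmetric and antisymmetric blocks built from $\Psi_1$.

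The heart of the argument is the Lax pair. Differentiating $\Psi$ in $\lambda$ and in $t$ gives a compatible system $\partial_\lambda\Psi=A\,\Psi$, $\partial_t\Psi=B\,\Psi$ with $A$ polynomial in $\lambda$ of degree set by $n$; expanding the zero--curvature condition $\partial_tA-\partial_\lambda B+[A,B]=0$ in powers of $\lambda$ produces a recursion whose closed form is governed precisely by the operators $\mathcal L_\pm^{\vec u}$ of the Definition. The nonlocal $D^{-1}$ terms arise from integrating the off--diagonal recursion relations, while the constraints $<\vec u,\vec v>\in D(\mathcal R)$ and $\{\vec u,\vec v\}\in D(\mathrm{Mat}(k,\mathcal R))$ are exactly what guarantees these inverses stay in $\mathcal R$ along the hierarchy; after $n$ iterations the recursion closes into \eqref{PIIequationIntro}. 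Matching the abstract recursion with $(\mathcal L_+^{\vec u}\mathcal L_-^{\vec u})^n$ is the step I expect to be the main obstacle, since the vector/matrix bookkeeping must be controlled through all $n$ iterations. The asymptotics \eqref{asymptoticsIntro} then follow from a small--norm analysis: as $t\to+\infty$ the operator norm tends to $0$, $\Psi$ is uniformly close to $\Psi_{\mathrm{free}}$, and each component of $\vec u$ reduces to its leading higher Airy term $\sqrt{\alpha_j-\alpha_{j+1}}\,\Ai_n(t+x_j)$.

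Finally, for \eqref{TWformula} I would compute the logarithmic derivative of $F_n$ in the translation parameter as a residue of $\Psi_1$ (the Jimbo--Miwa--Ueno tau--form); combined with the Lax pair this yields a differential identity equating a derivative of $\log F_n$ with the quadratic form $<\vec u,\vec u>$, which integrates, after one integration by parts and using $F_n\to 1$ at $+\infty$, to the $t$--weighted formula \eqref{TWformula}. The reality statement comes from a discrete symmetry of the RHP: complex conjugation composed with the reflection of the contour sends $\Psi(\lambda)$ to a transform of itself whose sign in the $j$--th channel is dictated by $\mathrm{sgn}(\alpha_j-\alpha_{j+1})$; transporting this symmetry through the identification of $\vec u$ with $\Psi_1$ forces $u_j$ to be real--valued when $\alpha_{j+1}<\alpha_j$ and purely imaginary when $\alpha_{j+1}>\alpha_j$.
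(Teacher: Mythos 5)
Your proposal follows essentially the same route as the paper: rewriting $F_n$ as the Fredholm determinant of an IIKS-integrable operator via the telescoping $\sum_j(\alpha_j-\alpha_{j+1})\chi_{(x_j,\infty)}$, encoding it in a $(k+1)\times(k+1)$ Riemann--Hilbert problem with $\vec u$ read off from the residue matrix $Y_1$, deriving the polynomial Lax pair $\partial_\lambda\Psi=A\Psi$, $\partial_t\Psi=B\Psi$ whose zero-curvature expansion yields the $\mathcal L_\pm^{\vec u}$ recursion, obtaining the asymptotics \eqref{asymptoticsIntro} by a small-norm argument after rescaling and contour deformation, getting \eqref{TWformula} from the $(Y_1)_{1,1}$ logarithmic-derivative identity integrated twice, and deducing the reality statement from the conjugation/reflection symmetry of the jump matrix. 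All of these steps correspond directly to the paper's Sections 2--4, so the approach is the same.
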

	\begin{remark}
		We will call the collection of equations in \eqref{PIIequationIntro} the \emph{vector valued PII hierarchy}. Their formulation, from an algebraic point of view, is completely analogous to the one we previously introduced, in collaboration with Thomas Bothner \cite{BothnerCafassoTarricone}, for the integro-differential Painlevé II hierarchy, see also \cite{Kraj}. Note, however, that the results contained in this article cannot be deduced from the ones in \cite{BothnerCafassoTarricone}, because of the assumption of smoothness for the weight function $w$, see Section 1.3 in loc.cit.
	\end{remark}
\begin{remark} We write down explicitly the first two members of the hierarchy \eqref{PIIequationIntro}, using the shorthand notation $\dot{\vec{u}} =D\vec u$ to denote the derivative.
For $n=1$, equation \eqref{PIIequationIntro} is 
\begin{equation}\label{eq:vectorPIIn1}
\ddot{\vec{u}} = 2 \vec u \vec u^{\top} \vec u + (\vec{x} + t ) \vec u \;\; \text{i. e.}\;\; \begin{cases}
\ddot{u}_1= 2 u_1\sum_{j=1}^{k}u_j^2 + (t+x_1)u_1\\
\ddot{u}_2=2u_2\sum_{j=1}^ku_j^2 + (t+x_2)u_2\\
\vdots\\
\ddot{u}_k=2u_k\sum_{j=1}^ku_j^2 + (t+x_k)u_k
\end{cases}
\end{equation}
that coincide indeed with the coupled system of Painlevé II equations introduced in \cite{ClaeysDoeraene}. For $n=2$, equation \eqref{PIIequationIntro} is 
\begin{equation}
\ddddot{\vec{u}} = 4 \ddot{\vec u}  \vec u^{\top} \vec u+ 8 \dot{\vec u } \dot{ \vec u}^{\top} \vec u+ 6 \vec u \vec u^{\top} \ddot{\vec u}+2u \dot{\vec u}^{\top}\dot{\vec u} -6 \vec u (\vec u^{\top}\vec u)^2 -(t+\vec{x})\vec u
\end{equation}
which is indeed as a vector-valued version of the second member of the Painlevé II hierarchy.
\end{remark}
The paper is organised as follows: in Section \ref{sec:2} we prove that the generating function $F_n(\vec{x}+t, \vec{\alpha})$ is equal to the Fredholm determinant of an integrable operator of IIKS type \cite{IIKS}. As a byproduct, we formulate and use the Riemann-Hilbert problem \ref{pb:main rhpb} to compute the logarithmic derivative of $F_n(\vec{x}+t, \vec{\alpha})$ with respect to $t$, and this concludes Section \ref{sec:2}. In Section \ref{sec3} we associate to the Riemann-Hilbert problem \ref{pb:main rhpb} a Lax pair for the vector-valued Painlevé II hierarchy \eqref{PIIequationIntro}. Section \ref{sec:4} concludes, collecting all the previous results, the proof of Theorem \ref{thm:main}.
	
	\section{$F_n(\vec{x},\vec{\alpha})$ and the associated Riemann-Hilbert problem}
	\label{sec:2}
	It is well known (see for instance \cite{Sosh:RandomPointFields}) that the generating function $F_n(\vec x, \vec \alpha)$ defined in \eqref{def:genfunction} can be expressed as a Fredholm determinant. More precisely,
	\begin{equation}\label{eq:FredhDet}
		F_n(\vec x, \vec \alpha) = \det \left(\I - \sum_{j = 1}^k \alpha_j \mathbb K_{n| A_j}\right),
	\end{equation}
	where $\mathbb K_n$ is the integral operator associated to the kernel \eqref{eqintro:kernel} and, for any Borel subset $B \subseteq \mathbb R$,  $\mathbb K_{n| B}$ indicates the restriction of $\mathbb K_n$ to $B$. For our purposes, it is convenient to recall a different representation of the kernel $K_n$ as a double contour integral. In what follows, let us denote
	\begin{equation}
		\psi_n(\lambda;t) := \frac{\lambda^{2n + 1}}{2n + 1} + \lambda t.
	\end{equation}
	It is easy to show (see for instance \cite{BothnerCafassoTarricone}) that, for any real $t$,
	\begin{equation}\label{defalternAi}
		\Ai_n(t) = \frac{1}{2 \pi} \int_{\Gamma_+} \exp \Big(\i \psi_n(t;\lambda) \Big) \d \lambda =  \frac{1}{2 \pi} \int_{\Gamma_-} \exp \Big(-\i \psi_n(t;\lambda) \Big) \d \lambda,
	\end{equation}
	where $\Gamma_+$ is any smooth contour oriented from $\infty{\rm e}^{\i a}$ to $\infty{\rm e}^{\i b}$ with $a \in \left(\frac{2n \pi}{2n + 1}, \pi \right)$ and $b \in (0, \frac{\pi}{2n + 1})$ (see Fig. \ref{fig:curves}), and $\Gamma_-$ its reflection with respect to the real axis. Actually, one can take \eqref{defalternAi} as an alternative definition of $\Ai_n$ (or, rather, of its analytical continuation). Then, combining \eqref{eqintro:kernel} with \eqref{defalternAi}, one proves \cite{CafassoClaeysGirotti, BothnerCafassoTarricone}
	\begin{lemma}
		The kernel defined in \eqref{eqintro:kernel} admits the double--contour integral representation
		\begin{equation}\label{eq:doublecontourkernel}
			K_n(x,y) = \frac{\i}{(2 \pi)^2} \int_{\Gamma_+} \d\lambda \int_{\Gamma_-} \d\mu \; \frac{{\rm e}^{\i (\psi_n(\lambda;x) - \psi_n(\mu;y))}}{\lambda - \mu} .
		\end{equation}
	\end{lemma}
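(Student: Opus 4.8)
The plan is to substitute into \eqref{eqintro:kernel} the two equivalent contour representations of $\Ai_n$ provided by \eqref{defalternAi}, using the $\Gamma_+$ representation for the first factor and the $\Gamma_-$ representation for the second. That is, I would write
$$ \Ai_n(x+z)\Ai_n(y+z) = \frac{1}{(2\pi)^2}\int_{\Gamma_+}\d\lambda\int_{\Gamma_-}\d\mu\; \e^{\i\psi_n(\lambda;x+z)-\i\psi_n(\mu;y+z)}. $$
The crucial algebraic observation is the additivity $\psi_n(\lambda;x+z)=\psi_n(\lambda;x)+\lambda z$, immediate from $\psi_n(\lambda;t)=\frac{\lambda^{2n+1}}{2n+1}+\lambda t$, which lets me factor the entire $z$-dependence into the single exponential $\e^{\i z(\lambda-\mu)}$ while leaving $\e^{\i(\psi_n(\lambda;x)-\psi_n(\mu;y))}$ free of $z$.

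Next I would carry out the $z$-integration first. Because $\lambda$ runs over $\Gamma_+$, whose unbounded branches lie in the upper half-plane, and $\mu$ runs over its mirror image $\Gamma_-$ in the lower half-plane, one has $\Im(\lambda-\mu)>0$; hence $\big|\e^{\i z(\lambda-\mu)}\big|=\e^{-z\Im(\lambda-\mu)}$ decays in $z$ and $\int_0^\infty \e^{\i z(\lambda-\mu)}\d z = \i/(\lambda-\mu)$. Inserting this and collecting the prefactor $\i/(2\pi)^2$ reproduces exactly \eqref{eq:doublecontourkernel}.

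The step I expect to be the only real obstacle is the justification of exchanging the $z$-integral with the two contour integrals, since the oscillatory integrals in \eqref{defalternAi} are, as written, merely conditionally convergent. I would remove this difficulty by first deforming $\Gamma_+$ and $\Gamma_-$ so that along their terminal directions the phase has strictly negative real part: for $\arg\lambda\in\{a,b\}$ with $a\in(\frac{2n\pi}{2n+1},\pi)$ and $b\in(0,\frac{\pi}{2n+1})$ one checks $\sin((2n+1)\arg\lambda)>0$, whence $\Re(\i\psi_n(\lambda;x))=-\frac{|\lambda|^{2n+1}}{2n+1}\sin((2n+1)\arg\lambda)+\bigO(|\lambda|)\to-\infty$, and symmetrically $\Re(-\i\psi_n(\mu;y))\to-\infty$ along $\Gamma_-$. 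On such contours the $\lambda$- and $\mu$-integrals converge absolutely and, combined with the bound $\big|\e^{\i z(\lambda-\mu)}\big|\le 1$ for $z\ge 0$, the integrand is absolutely integrable on $[0,\infty)\times\Gamma_+\times\Gamma_-$; Fubini's theorem then legitimises the interchange and completes the proof.
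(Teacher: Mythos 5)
Your proof is correct and is exactly the argument the paper has in mind --- the paper gives no self-contained proof, saying only that one combines \eqref{eqintro:kernel} with \eqref{defalternAi} (citing \cite{CafassoClaeysGirotti, BothnerCafassoTarricone}), which is precisely your substitution of the two contour representations, the splitting $\psi_n(\lambda;x+z)=\psi_n(\lambda;x)+\lambda z$, and the evaluation $\int_0^\infty \e^{\i z(\lambda-\mu)}\,\d z = \i/(\lambda-\mu)$, reproducing \eqref{eq:doublecontourkernel} with the right prefactor. The only sentence to tighten is the very last step of your Fubini justification: the bound $|\e^{\i z(\lambda-\mu)}|\le 1$ by itself cannot give integrability over the unbounded $z$-range, so you should in addition keep the deformed contours at a fixed distance $\delta>0$ from the real axis, which gives $\Im(\lambda-\mu)\ge 2\delta$ uniformly on $\Gamma_+\times\Gamma_-$ and hence the uniform exponential bound $|\e^{\i z(\lambda-\mu)}|\le \e^{-2\delta z}$; combined with the absolute convergence of the $\lambda$- and $\mu$-integrals on the deformed contours, this makes the triple integral absolutely convergent (and also guarantees $\Gamma_+\cap\Gamma_-=\emptyset$, so that the factor $1/(\lambda-\mu)$ is bounded).
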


\begin{figure}
	\centering
	\begin{tikzpicture}[scale=0.6]
	\draw[very thick,->] (-5.5,0) -- (5.5,0);
	\draw[very thick,->] (0,-1) -- (0,4.5);
	\draw[fill=gray, dashed, opacity=0.5] (5,0) to[out=90, in=-30] (3.2,2.6) node[above,black]{$\frac{\pi}{2n+1}$} to (0,0) to (-5,0) to[out=90, in=-150] (-3.2,2.6) node[above,black]{$\frac{2n\pi}{2n+1}$} to (0,0);
	\draw[>=latex,directed,blue] (-4, 0.3) to (4,0.3); 
	\draw[blue,dashed] (-5.5,.3) to (-4, 0.3);
	\draw[blue,dashed] (5.5,.3) to (4, 0.3);
	\draw[>=latex,directed,blue] (-4, 2)..controls (0,0.000000001).. (4,2) node[below right,blue]{$\Gamma_{+}$};
	\draw[dashed,blue] (-5, 2.5) to (-4,2);
	\draw[dashed,blue] (4, 2) to (5,2.5);
	\end{tikzpicture}
	\caption{These are possible choices for the curve $ \Gamma_+$ appearing the integral representation  of the Airy function $\mathrm{Ai}_n$. }
	\label{fig:curves}
\end{figure}
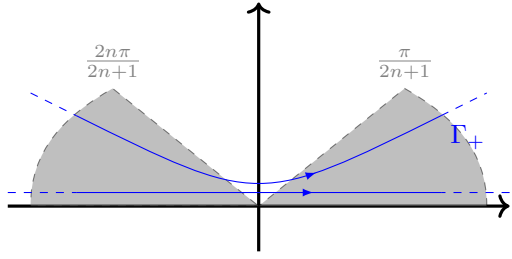
	We now define two vector--valued functions\footnote{The notation we used in \eqref{eq: f,g} reflects the fact that it will be convenient, in the sequel, to think about $f_n,g_n$ as functions taking values in $\mathbb R \oplus \mathbb R^k$} $f_n,g_n = \Gamma \longrightarrow \R^{k+1}$, with $\Gamma := \Gamma_+ \cup \Gamma_-$:
	\begin{equation}\label{eq: f,g}
		f_n(\lambda) := \frac{1}{2 \pi} \begin{pmatrix}
			\e^{-\frac{\i}2 \psi_n(\lambda;0)} \chi_{\Gamma_-}(\lambda) \\
			\\
			\hline
			\\
			\sqrt{\alpha_1 - \alpha_2}\e^{\frac{\i}2 \psi_n(\lambda; 2t + 2x_1) } \chi_{\Gamma_+}(\lambda)\\
			\vdots\\
			\\
			\sqrt{\alpha_k - \alpha_{k+1}}\e^{\frac{\i}2 \psi_n(\lambda; 2t + 2x_k) } \chi_{\Gamma_+}(\lambda)\
		\end{pmatrix}, \quad 
		g_n(\lambda) := \begin{pmatrix}
			\e^{\frac{\i}2 \psi_n(\lambda;0)} \chi_{\Gamma_+}(\lambda) \\
			\\
			\hline
			\\
			\sqrt{\alpha_1 - \alpha_2}\e^{-\frac{\i}2 \psi_n(\lambda; 2t + 2x_1) } \chi_{\Gamma_-}(\lambda)\\
			\vdots\\
			\\
			\sqrt{\alpha_k - \alpha_{k+1}}\e^{-\frac{\i}2 \psi_n(\lambda; 2t + 2x_k) } \chi_{\Gamma_-}(\lambda)\
		\end{pmatrix}
	\end{equation}
	and denote with $\mathbb L_n : L^2(\Gamma) \longrightarrow L^2(\Gamma)$ the associated integral operator with integrable (in the sense of \cite{IIKS}) kernel defined by the equation
	\begin{equation}\label{eq:Lintegrable}
		(\lambda - \mu)L_n(\lambda,\mu) = f_n^\top(\lambda)g_n(\mu).
	\end{equation}

	The following proposition is a generalization of Proposition 2.1 in \cite{CafassoClaeysGirotti} (see also \cite{BertolaCafasso1} for the case $n = 1$ and $k$ arbitrary). 
	\begin{proposition} The generating function $ F_n(\vec{x} + t,\vec{\alpha})$ coincide with the Fredholm determinant of the integrable operator $\mathbb{L}_n$, i.e.
		\begin{equation} 
			\det(\I - \mathbb L_{n}) = F_n(\vec{x} + t,\vec{\alpha}).
		\end{equation}
	\end{proposition}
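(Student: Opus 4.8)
The plan is to recognise $F_n(\vec x + t, \vec \alpha)$ and $\det(\I - \mathbb{L}_n)$ as Fredholm determinants of the same operator read in physical space and on the contour $\Gamma$ respectively, and to pass from one to the other by the cyclic (Sylvester) identity $\det(\I - \mathsf{P}\mathsf{Q}) = \det(\I - \mathsf{Q}\mathsf{P})$. Starting from \eqref{eq:FredhDet} with $\vec x$ replaced by $\vec x + t$, I would first note, by the usual triangular argument for operators with disjoint supports, that $\det(\I - \sum_{j} \alpha_j \mathbb{K}_{n|A_j}) = \det(\I - M \mathbb{K}_n)$, where $M = \sum_{j=1}^k \alpha_j \chi_{A_j}$ is the multiplication operator and the intervals are now $A_j = (x_j + t, x_{j-1} + t)$. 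A telescoping (Abel summation) then rewrites the weight on the nested half-lines: since $x_0 = +\infty$ and $\alpha_{k+1} \equiv 0$, one has $M = \sum_{j=1}^k (\alpha_j - \alpha_{j+1}) \chi_{(x_j + t, \infty)}$. This is exactly the origin of the factors $\sqrt{\alpha_j - \alpha_{j+1}}$ in \eqref{eq: f,g}, distributed symmetrically between $f_n$ and $g_n$.

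Next I would exploit the convolution structure $\mathbb{K}_n = \mathsf{A}_n^* \mathsf{A}_n$, where $\mathsf{A}_n \colon L^2(\R) \to L^2(0,\infty)$ has kernel $\Ai_n(x+z)$, together with the contour representation \eqref{defalternAi}, writing each Airy leg as an integral over $\Gamma_+$ (resp. $\Gamma_-$). Along the factorisation the half-line integral over the convolution variable produces the Cauchy kernel through $\int_0^\infty \e^{\i(\lambda - \mu) z}\, \d z = \tfrac{\i}{\lambda - \mu}$, convergent because $\Im(\lambda - \mu) > 0$ for $\lambda \in \Gamma_+$, $\mu \in \Gamma_-$; iterating the cyclic identity this lands the determinant on $L^2(\Gamma)$ with the integrable kernel \eqref{eq:Lintegrable}. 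The $(k+1)$-component structure of $f_n, g_n$ then has a transparent meaning: reading $\mathbb{L}_n$ in $2 \times 2$ block form with respect to $L^2(\Gamma_+) \oplus L^2(\Gamma_-)$, the diagonal blocks vanish (the supports of the components never overlap on a single contour), so a Schur complement gives $\det(\I - \mathbb{L}_n) = \det(\I - \mathbb{L}_{+-}\mathbb{L}_{-+})$ on $L^2(\Gamma_+)$; in the resulting $\nu$-integral over $\Gamma_-$ the distinguished $0$-component supplies the complementary half of the phase, so that $-\tfrac{\i}{2}\psi_n(\nu; 0) - \tfrac{\i}{2}\psi_n(\nu; 2t + 2x_j) = -\i \psi_n(\nu; t + x_j)$ reconstructs exactly the second Airy leg of the shifted kernel $K_n(\,\cdot + x_j + t,\, \cdot + x_j + t)$, while the $j$-components carry the endpoints $x_j + t$ and the weights. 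This is what reassembles $M \mathbb{K}_n$.

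Before invoking the determinant identities I would record the compatibility check that $\mathbb{L}_n$ is a genuine IIKS operator, i.e. that its kernel is regular on the diagonal: since the $0$-component of $f_n$ lives on $\Gamma_-$ and that of $g_n$ on $\Gamma_+$, and the $j$-components live on the opposite contours, one has $f_n^\top(\lambda) g_n(\lambda) = 0$ for all $\lambda \in \Gamma$, so $L_n(\lambda,\mu)$ extends continuously across $\lambda = \mu$. I expect the main obstacle to be the functional-analytic bookkeeping rather than the algebra: one must check that every factor in the factorisation is Hilbert–Schmidt, so that the products are trace class and the cyclic identity is legitimate, which rests on the exponential decay of $\e^{\i \psi_n}$ along $\Gamma_\pm$ granted by the sector conditions $a \in (\tfrac{2n\pi}{2n+1}, \pi)$, $b \in (0, \tfrac{\pi}{2n+1})$ defining the contours, and one must justify by Fubini all the interchanges of the iterated contour and half-line integrals. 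A further delicate point is that the half-lines $(x_j + t, \infty)$ are nested rather than disjoint, so the telescoping has to be tracked carefully for the overlapping contributions to reassemble precisely into the clean vector structure of \eqref{eq: f,g}; this is the step that genuinely extends Proposition 2.1 of \cite{CafassoClaeysGirotti} (which treats $k = 1$) and the $n = 1$ computation of \cite{BertolaCafasso1}.
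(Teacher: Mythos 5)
Your proposal is correct, but it runs the identification in the opposite direction from the paper and with a different key device, so a comparison is worthwhile. The paper starts from $\det(\I-\mathbb{L}_n)$: it writes $\I-\mathbb{L}_n$ in block form on $L^2(\Gamma_+)\oplus L^2(\Gamma_-)$ (your observation that the diagonal blocks vanish is exactly this), performs your Schur-complement step in the equivalent guise of multiplying by $\det(\I-\mathbb{G}_{n;x_j})=1$ to reach $\det\bigl(\I-\sum_j(\alpha_j-\alpha_{j+1})\,\mathbb{G}_{n;x_j}\circ\mathbb{F}_n\bigr)$, and then identifies each composed kernel with the half-line--restricted Airy kernel by conjugating with a multiplication operator and the Fourier transform and computing a residue: the dichotomy $x\geq t+x_j$ versus $x<t+x_j$ produced by that residue is precisely where the indicator $\chi_{[x_j+t,\infty)}$ comes from. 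You instead go from the physical side to the contour side: you telescope the weights first (in the paper this step is implicit, absorbed into the definition \eqref{eq: f,g} and the final appeal to \eqref{eq:FredhDet}), factor $\mathbb{K}_n=\mathsf{A}_n^*\mathsf{A}_n$ using \eqref{eqintro:kernel}, and convert half-line integrals into Cauchy kernels via $\int_0^\infty \e^{\i(\lambda-\mu)z}\,\d z=\i/(\lambda-\mu)$, with iterated cyclicity landing on $L^2(\Gamma)$. The two mechanisms are Fourier-dual to one another --- your half-line integral is the inverse of the paper's residue computation --- so the mathematical content coincides; your organization makes the origin of the $(k+1)$-component structure and of the phases in \eqref{eq: f,g} transparent and avoids contour deformation, at the price of heavier bookkeeping in the vectorized cyclic identity (the sum over $j$ must be packaged as a single product $\mathsf{P}\mathsf{Q}$ of operators between $L^2(\mathbb R)$ and a direct sum before any swap), whereas the paper's route is concrete at the level of explicit kernels. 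The trace-class care you flag is genuinely needed and is handled in the paper by factoring $\mathbb{F}_n$ and $\mathbb{G}_{n;x_j}$ through Hilbert--Schmidt operators on an auxiliary contour $\Gamma_0=\mathbb{R}+\epsilon$; you would need the same kind of factorization to legitimize each application of $\det(\I-\mathsf{P}\mathsf{Q})=\det(\I-\mathsf{Q}\mathsf{P})$. One caution: your opening step, asserting $\det\bigl(\I-\sum_j\alpha_j\mathbb{K}_{n|A_j}\bigr)=\det(\I-M\mathbb{K}_n)$ ``by the usual triangular argument,'' should instead be the observation that $\mathbb{K}_{n|A_j}$ must be read as a one-sided restriction $\mathbb{K}_n\chi_{A_j}$, under which the equality is immediate; under the two-sided reading $\chi_{A_j}\mathbb{K}_n\chi_{A_j}$ the two determinants genuinely differ even for disjoint intervals (already for a rank-one projection kernel), and only the one-sided reading makes \eqref{eq:FredhDet} the correct generating-function formula.
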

	\begin{proof}
		From the very definition of $\mathbb L_n$, using the natural polarization of $L^2(\Gamma) \simeq L^2(\Gamma_+) \oplus L^2(\Gamma_-) $, we can write in block form
		$$
		\I - \mathbb L_n = \begin{pmatrix} 
			\I & -\mathbb F_n \\
			-\ds\sum_{j = 1}^k (\alpha_j - \alpha_{j + 1}) \mathbb G_{n;x_j} & \I
		\end{pmatrix}
		$$
		where $\mathbb F_n : L^2(\Gamma_+) \longrightarrow L^2(\Gamma_-), \quad \mathbb G_{n;x_j} : L^2(\Gamma_-) \longrightarrow L^2(\Gamma_+), \; j = 1, \cdots, k$ and $\mathbb L_n,\mathbb G_{n;x_j}$ have kernels given by
		$$
		(\mu - \lambda)F_n(\mu,\lambda) = \frac{1}{2 \pi}{\rm e}^{\frac{i}2 (\psi_n(\lambda;0) - \psi_n(\mu;0))}\chi_{\Gamma_-(\mu)}\chi_{\Gamma_+(\lambda)},
		$$
		$$
		(\xi - \mu)G_{n;x_j}(\xi,\mu) = \frac{1}{2 \pi}{\rm e}^{\frac{i}2 (\psi_n(\xi;2t + 2x_j) - \psi_n(\mu;2t + 2 x_j))}\chi_{\Gamma_+(\xi)}\chi_{\Gamma_-(\mu)}.
		$$
		We consider the  two corresponding operators $\mathbb{F}_n, \mathbb{G}_{n;x_j}$ extended to the whole space $ L^2(\Gamma) = L^2(\Gamma_+) \oplus L^2(\Gamma_-) $, acting trivially on the respective orthogonal component. We first notice that both the operators $\mathbb{F}_n, \mathbb{G}_{n;x_j}$ are Hilbert-Schmidt on the whole space. Indeed: 
		\begin{equation}
		   \big\| \mathbb{F}_n \big\|_2 ^2 = \frac{1}{(2\pi)^2}\int_{\Gamma_+}\vert \d \lambda \vert \int_{\Gamma_-} \vert \d \mu\vert \frac{{\rm e}^{-\mathfrak{I} (\psi_n(\lambda;0) - \psi_n(\mu;0))}}{\vert\mu - \lambda\vert^2} < +\infty
		\end{equation}
		and also 
		\begin{equation}
		    \big\| \mathbb{G}_{n;x_j}\big\|_2 ^2 = \frac{1}{(2\pi)^2}\int_{\Gamma_-}\vert  \d \mu \vert  \int_{\Gamma_+} \vert \d \xi \vert \frac{{\rm e}^{-\mathfrak{I} (\psi_n(\xi;2t+2x_j) - \psi_n(\mu;2t + 2x_j))}}{\vert\xi - \mu\vert^2} < +\infty.
		\end{equation}
	Moreover, they are both trace-class, since they both can be obtained as composition of Hilbert-Schmidt operators. To see that, we consider a new contour $\Gamma_0 \coloneqq \mathbb{R}+ \epsilon$, not intersecting either $\Gamma_+$ and $\Gamma_-$. 
	We start by the case of $\mathbb{G}_{n;x_j}$. We define the following two operators 
	\begin{equation}
	\begin{aligned}
	\mathbb{B}_{n;x_j}^{(1)}: L^2(\Gamma_-)\to L^2(\Gamma_0), \quad \text{with kernel}\quad B_{n;x_j}^{(1)} (\zeta,\mu) = \frac{{\rm e}^{-\frac{\i}{2}\psi_n(\mu;2x_j+2t)} }{2\pi \i (\zeta-\mu)} &\\
		\mathbb{B}_{n;x_j}^{(2)}: L^2(\Gamma_0)\to L^2(\Gamma_+), \quad \text{with kernel}\quad B_{n;x_j}^{(2)} (\lambda,\zeta) = \frac{{\rm e}^{\frac{\i}{2}\psi_n(\lambda;2x_j+2t)} }{2\pi (\zeta-\lambda )}.
	\end{aligned}
	\end{equation}
	Their composition gives $\mathbb{B}_{n;x_j}^{(2)} \circ \mathbb{B}_{x_j}^{(1)} = \mathbb{G}_{n;x_j}$ and, since they are both Hilbert-Schmidt, $\mathbb{G}_{n;x_j}$ is trace-class. 
	The case $\mathbb{F}_n$ is treated similarly using the two operators
	\begin{equation}
		\begin{aligned}
		\mathbb{C}^{(1)}_n:L^2(\Gamma_+)\to L^2(\Gamma_0)\quad \text{with kernel } \quad C^{(1)}_n (\lambda,\mu) = \frac{\rm{e}^{-\frac{\i}{2}}\psi_n(\mu;0)}{2\pi \i (\mu-\lambda)} &\\
		\mathbb{C}^{(2)}_n:L^2(\Gamma_0)\to L^2(\Gamma_-)\quad \text{with kernel } \quad C^{(2)}_n (\zeta,\lambda) = \frac{\rm{e}^{\frac{\i}{2}\psi_n(\zeta;0)}}{2\pi  (\lambda-\zeta)}.
		\end{aligned}
	\end{equation}
	This means, in particular, that the Fredholm determinants $\det(\I - \mathbb{L}_n)$ and $\det(\I-\mathbb{G}_{n;x_j})$ are well defined, and moreover $\det(\I-\mathbb{G}_{n;x_j}) \equiv 1$. Hence,
	\begin{eqnarray}
		{\rm det} ( \I - \mathbb{L}_n)&=&{\rm det} ( \I - \mathbb{G}_{n;x_j}) {\rm det} ( \I - \mathbb{L}_n) =\nonumber \\&=& {\rm det}\left( \begin{pmatrix} 
		\I & -\mathbb F_n  \\
		0 &	-\ds\sum_{j = 1}^k (\alpha_j - \alpha_{j + 1}) \mathbb G_{n;x_j} \circ \mathbb{F}_n
		\end{pmatrix}\right) =  {\rm det} \left(\I - \sum_{j = 1}^k (\alpha_j - \alpha_{j + 1})\mathbb G_{n;x_j}\circ \mathbb F_n\right), \nonumber
		\end{eqnarray}
		where the equality between the two lines is easily proven using the block representation of ${\mathbb L}_n$ and ${\mathbb G}_{n;x_j}$ induced by the polarization $L^2(\Gamma) \simeq L^2(\Gamma_+) \oplus L^2(\Gamma_-) $.\\
	Notice that each factor $\mathbb G_{n;x_j}\circ \mathbb F_n : L^2(\Gamma_+) \longrightarrow L^2(\Gamma_+)$ has kernel equal to
		$$
		(G_{n;x_j} \circ F_n)(\xi,\lambda) = \frac{1}{(2 \pi )^2}{\rm e}^{\frac{\i}2\big(\psi_n(\xi;2t + 2x_j) + \psi_n(\lambda;0) \big)} \int_{\Gamma_-} \frac{{\rm e}^{-\i \psi_n(\mu;t + x_j)}}{(\xi - \mu)(\mu - \lambda)} \d \mu.
		$$
		We now conjugate $\mathbb G_{x_j}\circ \mathbb F$ by the multiplication operator $\mathbb P_n$ with kernel $P_n(\lambda,\mu) = {\rm e}^{-\frac{i}2 \psi_n(\lambda,0)}\delta(\lambda - \mu)$ so to obtain
		$$
		(P_n \circ G_{x_j} \circ F \circ P_n^{-1})(\xi,\lambda) = \frac{1}{(2\pi)^2} {\rm e}^{\i(t + x_j)\xi} \int_{\Gamma_-} \frac{{\rm e}^{\i (\psi_n(\lambda;0) - \psi_n(\mu,x_j + t)}}{(\xi - \mu)(\mu - \lambda)} \d \mu.$$
		Next we observe that, in the double-contour integral representation \eqref{eq:doublecontourkernel} of $K_n$, one can deform the contour $\Gamma_+$ into $\mathbb R$. Using this property and  
		conjugating once more with the standard Fourier transform (i.e. with the integral operator with kernel $\mathcal F(x,\xi) = \frac{1}{\sqrt{2 \pi}}{\rm e}^{-x\xi}$) we finally obtain
		\begin{eqnarray}
			\Big(\mathcal F \circ P_n \circ G_{n;x_j} \circ F_n \circ P_n^{-1} \mathcal F^{-1}\Big)(x,y) = \\
			\frac{1}{(2 \pi)^2} \int_{\mathbb R} \frac{\d \xi}{\sqrt{2 \pi}} {\rm e}^{\i(t + x_j - x)\xi}\int_{\Gamma_-}\d \mu \int_{\mathbb R} \frac{\d \lambda}{\sqrt{2 \pi}} \frac{{\rm e}^{\i (\psi_n(\lambda;0) - \psi_n(\mu,x_j + t)}}{(\xi - \mu)(\mu - \lambda)} {\rm e}^{\i \lambda} = \\
			\begin{cases}
				\ds\frac{\i}{(2\pi)^2}\int_{\Gamma_-} \d \mu \int_{\mathbb R} \d \lambda \frac{{\rm e}^{\i(\psi_n(\lambda;y) - \psi_n(\mu,x))}}{\lambda - \mu} \quad &\text{if} \; x \geq t + x_j \\
				\\
				0 & \text{if} \; x < t + x_j 
			\end{cases}\quad\quad,
		\end{eqnarray}
		where the latter equality is obtained deforming the outer contour $\mathbb R$ toward $+\infty{\rm e}^{\pm \pi \i}$ (depending on the sign of $t + x_j - x$) and taking a residue.
		Since the conjugations we performed do not change the value of the Fredholm determinant, we find
		$$
		\det(\I - \mathbb L_n) = \det \left(\I - \sum_{j = 1}^k (\alpha_j - \alpha_{j+1}) \mathbb K_{n | [x_j + t,\infty)}\right),
		$$
		and this latter is equal to $F_n(\vec x + t,\vec\alpha)$ because of \eqref{eq:FredhDet}.
	\end{proof}
	As $\mathbb L_n$ is of integrable type, in the sense of \cite{IIKS}, it is naturally associated to a Riemann-Hilbert problem we are going to define, and whose jumps are given by the $k+1 $ dimensional matrix $J_{Y}(\lambda) := \mathbf 1_{k+1} - 2 \pi \i f_n(\lambda) g_n^\top(\lambda)$ for $\lambda \in \Gamma$.
	
	\begin{pb}\label{pb:main rhpb}
		Find a sectionally--analytic function $Y(\bullet;n,\x+t,\vec\alpha) : \mathbb C/ \Gamma \longrightarrow \mathrm{GL}(k+1,\mathbb C)$ such that
		\begin{itemize}
			\item[a)] $Y$ has continuous boundary values $Y_{\pm}$ as $\lambda \in \Gamma$ is approached from the left $(+)$ or right $(-)$ side, and they are related by
			\begin{equation}
				\begin{array}{rcl}
					Y_+(\lambda) &=& Y_-(\lambda) \left(\begin{array}{c|ccccc}
						1 & & & -\i \Theta_n^\top(-\lambda;\vec x +t, \vec \alpha)\chi_{\Gamma_-}(\lambda) & & \\
						\hline
						\\
						\\
						
						- \i \Theta_n(\lambda; \vec x + t, \vec\alpha) \chi_{\Gamma_+}(\lambda)&&& \mathbf 1_k &&
						\\
						\\
						\\
					\end{array}\right),
				\end{array}
			\end{equation}
			where $\Theta_n(\lambda; \vec x + t, \vec\alpha)$ is the (column) vector 
			\begin{equation}
				\Theta_n(\lambda; \vec x +t, \vec\alpha) := \left( \sqrt{\alpha_j - \alpha_{j+1}}{\rm e}^{\i \psi_n(\lambda;t + x_j)}\right)_{j = 1}^k.
			\end{equation}
			\item[b)] There exists a matrix $Y_1 = Y_1(n,\vec x + t, \vec \alpha)$, independent of $\lambda$, such that $Y$ satisfies
			\begin{equation}\label{eq:asymp RH}
				Y(\lambda) = \mathbf 1_{k+1} + Y_1 \lambda^{-1} + \mathcal O(\lambda^{-2}), \quad \lambda \to \infty.
			\end{equation}
		\end{itemize}
	\end{pb}
	We record in the following remark some symmetries that will be useful in the sequel.
	\begin{remark}\label{remark2.4}
		The jump matrix $J_Y(\lambda;n,\vec x + t,\vec \alpha) \equiv J_Y(\lambda)$ of the Riemann-Hilbert problem \ref{pb:main rhpb} satisfies the following two symmetries:
		$$
		J_Y^{-\top}(-\lambda) = D_1J_Y(\lambda)D_1 \quad \quad  \overline{J_Y(\overline{\lambda})} = D_2 J_Y(-\lambda) D_2,
		$$
		where $D_1 := \diag(1,-1\ldots,-1)$ and $D_2 := \diag(1,c_1,\ldots,c_k)$ with $c_j := -\sgn(\alpha_j - \alpha_{j+1})$. 
		Consequently, the unique solution $Y$ to the Riemann-Hilbert problem satisfy the symmetry relations
		\begin{equation}
			D_1Y^{\top}(-\lambda)D_1 = Y^{-1}(\lambda), \quad D_2\overline{Y(\overline{\lambda})}D_2 = Y(\lambda).
		\end{equation} 
		In particular, using the expansion of the function $Y$ at $\lambda \to \infty$ together with the symmetries above, we obtain that 
		\begin{equation}\label{Y1}
			Y_1 = \left(\begin{array}{c|ccccc}
				-\delta  & & \vec u^\top&  \\
				\hline
				\\
				\vec u&&\Delta &
				\\
				\\
			\end{array}\right)
		\end{equation}
		where the entry $u_j,\, j = 1, \ldots, k$ of $\vec u = \vec u(n,\vec x + t,\vec \alpha)$ is real if $\alpha_j - \alpha_{j + 1} > 0$ and purely imaginary if $\alpha_j - \alpha_{j + 1} < 0$. Moreover, since $\det Y \equiv 1$, we also have that $\delta = \mathrm{Tr}(\Delta)$.
	\end{remark}
	\begin{remark}
		The jump matrix $J_Y(\lambda;n,\vec x + t, \vec \alpha) \equiv J_Y(\lambda)$ can be factorized as
		\begin{equation}\label{factoriz jump}
			J_Y(\lambda) = \exp (M) J_Y(0) \exp (-M) 
		\end{equation}
		with $M\equiv M(\lambda;n,\vec x + t, \vec \alpha) \coloneqq \diag (M_0,M_1,\dots,M_k)$,
		\begin{equation}
			M_0 := -\frac{i}{k+1}\sum_{j=1}^{k}\psi_n(\lambda; t+ x_j),\;\;\;\; M_{\ell} := M_0 +i\psi_n(\lambda; x_{\ell}+t),\;\; \ell=1,\dots,k.
		\end{equation}
		Note that the matrix $ J_Y(0)$ does not depend on $\vec{x},t$ or $n$.
	\end{remark}

	\begin{proposition}\label{prop: first log derivative}
		The unique solution $Y$ of the Riemann--Hilbert problem \ref{pb:main rhpb} is related to the Fredholm determinant $F_n(\x;\vec{\alpha})$ via the formula
		\begin{equation}\label{eq: ds fred det}
			\frac{\partial}{\partial t} F_n(\x+t,\vec{\alpha}) = \i(Y_1)_{1,1}.
		\end{equation}
	\end{proposition}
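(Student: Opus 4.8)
The plan is to combine the identification $F_n(\vec x + t,\vec\alpha)=\det(\I-\mathbb{L}_n)$ from the previous proposition with the standard theory of integrable operators, reading \eqref{eq: ds fred det} as the logarithmic derivative of the Fredholm determinant. Since $\mathbb{L}_n$ is trace-class and depends smoothly on $t$, I would first write
\begin{equation*}
\frac{\partial}{\partial t}\log\det(\I-\mathbb{L}_n)=-\tr\!\left((\I-\mathbb{L}_n)^{-1}\partial_t\mathbb{L}_n\right),
\end{equation*}
and then invoke the Its--Izergin--Korepin--Slavnov correspondence \cite{IIKS}: the resolvent of $\mathbb{L}_n$ is again of integrable type and is reconstructed from the solution $Y$ of Riemann-Hilbert Problem \ref{pb:main rhpb} (the boundary-value ambiguities being harmless since $f_n^\top g_n\equiv 0$ on $\Gamma$). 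This expresses the right-hand side through the Malgrange-type contour integral
\begin{equation*}
\frac{\partial}{\partial t}\log\det(\I-\mathbb{L}_n)=\frac{1}{2\pi\i}\int_\Gamma\tr\!\left(Y_-^{-1}(\lambda)Y_-'(\lambda)\,\big(\partial_t J_Y(\lambda)\big)J_Y^{-1}(\lambda)\right)\d\lambda .
\end{equation*}

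Next I would exploit the very simple $t$-dependence of the problem. Because $\partial_t\psi_n(\lambda;t+x_j)=\lambda$, the exponent $M$ of the factorization \eqref{factoriz jump} satisfies $\partial_t M=\i\lambda A+c\,\lambda\,\mathbf 1_{k+1}$ with $A:=\diag(0,1,\dots,1)$ and a scalar constant $c$. Since $M$ is diagonal, this gives $\partial_t J_Y=(\partial_t M)J_Y-J_Y(\partial_t M)$, hence $(\partial_t J_Y)J_Y^{-1}=\partial_t M-J_Y(\partial_t M)J_Y^{-1}$. Feeding this into the integral and using the jump relation $Y_+=Y_-J_Y$ together with the factorization, the integrand rearranges into the boundary-value jump across $\Gamma$ of the function $\tr\!\left(Y^{-1}(\lambda)Y'(\lambda)\,\partial_t M(\lambda)\right)$, which is analytic in $\C\setminus\Gamma$. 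The integral over the unbounded $\Gamma$ then collapses to a single residue at $\lambda=\infty$,
\begin{equation*}
\frac{\partial}{\partial t}\log\det(\I-\mathbb{L}_n)=-\Res_{\lambda=\infty}\tr\!\left(Y^{-1}(\lambda)Y'(\lambda)\,\partial_t M(\lambda)\right).
\end{equation*}

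Finally I would evaluate the residue from the large-$\lambda$ expansion \eqref{eq:asymp RH}. From $Y=\mathbf 1_{k+1}+Y_1\lambda^{-1}+\bigO(\lambda^{-2})$ one gets $Y^{-1}Y'=-Y_1\lambda^{-2}+\bigO(\lambda^{-3})$, while $\partial_t M=\i\lambda A+c\lambda\mathbf 1_{k+1}$; since $\det Y\equiv1$ forces $\tr(Y^{-1}Y')\equiv0$, the scalar part $c\lambda\mathbf 1_{k+1}$ drops out identically, leaving $\Res_{\lambda=\infty}\tr(Y^{-1}Y'\partial_t M)=\i\,\tr(AY_1)$. Using the block form \eqref{Y1} of $Y_1$ and $A=\diag(0,1,\dots,1)$ one has $\tr(AY_1)=\tr(\Delta)=\delta$, so that $-\i\,\tr(AY_1)=-\i\delta=\i(Y_1)_{1,1}$, which is exactly \eqref{eq: ds fred det}.

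The residue extraction in the last paragraph is routine. The main obstacle is the analytic justification underlying the middle steps: establishing the IIKS/Malgrange differentiation formula rigorously in the present $(k+1)$-dimensional, two-contour setting (differentiability of the Fredholm determinant and the identification of the resolvent with $Y$), and, above all, justifying that the contour integral over the unbounded $\Gamma$ genuinely reduces to the residue at infinity. This requires quantitative decay of $\tr(Y^{-1}Y'\partial_t M)$ along the tails of $\Gamma_\pm$ --- which holds because these contours are chosen in the sectors where $\e^{\i\psi_n}$ decays, forcing $J_Y\to\mathbf 1_{k+1}$ and hence $Y\to\mathbf 1_{k+1}$ rapidly --- together with the observation that $M$ is polynomial, hence analytic, at every finite $\lambda$, so that no other singularity contributes.
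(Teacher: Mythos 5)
Your proposal is correct and follows essentially the same route as the paper's proof: the paper likewise starts from the Malgrange/IIKS differentiation formula (quoted from Bertola--Cafasso, Theorem 3.2, which is stated there for $\partial_t \log\det$, so your ``logarithmic derivative'' reading of \eqref{eq: ds fred det} matches the paper's own implicit convention), uses the factorization \eqref{factoriz jump} with diagonal $M$ to write $(\partial_t J_Y)J_Y^{-1}=\partial_t M - J_Y(\partial_t M)J_Y^{-1}$, and then collapses the contour integral to a formal residue at infinity evaluated via \eqref{eq:asymp RH} and the block form \eqref{Y1}. The only step you elide that the paper makes explicit is why the extra term $\tr\left(J_Y^{-1}J_Y'\,\partial_t M\right)$ arising in the rearrangement vanishes (it is the trace of a strictly block-triangular matrix against a diagonal one); with that observation added, the two arguments coincide.
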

	\begin{proof}
		As we proved that $F_n(\vec x + t,\vec \alpha)$ is the Fredholm determinant associated to the integrable kernel $L_n$ defined in  \eqref{eq:Lintegrable}, the following general formula, proven in \cite{BertolaCafasso1}, Theorem 3.2, relates $F_n(\x + t,\vec\alpha)$ to the (unique) solution of the Riemann--Hilbert problem \ref{pb:main rhpb}:
	\begin{equation}
			\frac{\partial}{\partial t } F(\vec{x}+t,\vec{\alpha}) = \int_{\Gamma}\tr \left( Y^{-1}_-(\lambda) Y_-'(\lambda) \frac{\partial}{\partial t }J_Y(\lambda) J_Y^{-1}(\lambda)\right)\frac{\ d\lambda}{2\pi \i},
		\end{equation}
		where the symbol  $'$ denotes the derivative w.r.t. the complex parameter $\lambda$.
		Thanks to the factorization of the jump matrix $J_Y(\lambda)$ given in \eqref{factoriz jump} and the nature of the contour $\Gamma = \Gamma_+ \cup \Gamma_-$, the integral in the right hand side is computed as a formal residue at infinity. Indeed, we start noticing that
		\begin{equation}
		\begin{aligned}
		 	\int_{\Gamma_+\cup\Gamma_-}\tr \left( Y^{-1}_-(\lambda) Y_-'(\lambda) \frac{\partial}{\partial t }J_Y(\lambda) J_Y^{-1}(\lambda)\right)\frac{\d\lambda}{2\pi \i}& =  \\ \int_{\Gamma_+\cup\Gamma_-}\tr \left( Y^{-1}_-(\lambda) Y_-'(\lambda) \left( \frac{\partial}{\partial t }M(\lambda) - J_Y(\lambda)\frac{\partial}{\partial t }M(\lambda) J_Y^{-1}(\lambda)\right)  \right)\frac{\d\lambda}{2\pi \i} =\\
		 	\underbrace{\int_{\Gamma_+\cup\Gamma_-}\tr \left( Y^{-1}_-(\lambda) Y_-'(\lambda) \frac{\partial}{\partial t }M(\lambda) \right)\frac{\d\lambda}{2\pi \i}}_{(\clubsuit)} -  \int_{\Gamma_+\cup\Gamma_-}\tr \left( Y^{-1}_+(\lambda) Y_+'(\lambda)\frac{\partial}{\partial t }M(\lambda)\right)\frac{\d\lambda}{2\pi \i} 
		 	\\ + \underbrace{\int_{\Gamma_+\cup\Gamma_-}\tr \left(J_Y^{-1}(\lambda) J_Y'(\lambda)\frac{\partial}{\partial t }M(\lambda) \right)\frac{\d\lambda}{2\pi \i}}_{(\spadesuit)} =\\
		 	-\lim_{R\rightarrow + \infty}\int_{C_R} \tr \left( Y^{-1}(\lambda) Y'(\lambda)\frac{\partial}{\partial t }M(\lambda)\right)\frac{\d\lambda}{2\pi \i}. 
		\end{aligned}
		\end{equation}
		In the last passage we used that the terms denoted with $(\clubsuit)$ and $(\spadesuit)$ are zero (as we will see in a moment) and we rewrote the remaining term by deforming the contour $\Gamma_+ \cup \Gamma_-$ into a circle $C_R$ centered in zero and of increasing radius $R$. Indeed, $(\spadesuit)$ is zero because of the form of the jump matrix $J_Y$ (we are computing the trace of a strictly lower/upper triangular matrix), while $(\clubsuit)$ is zero because the integrations along $\Gamma_+$ and $\Gamma_-$ cancel out. Finally, using the asymptotic expansion of $Y$ at infinity combined with 
		$$\frac{\partial M}{\partial t }=\frac{i \lambda}{k+1}\diag(-k,1,\dots,1),$$
		we explicit compute
		$$
			-\lim_{R\rightarrow + \infty}\int_{C_R} \tr \left( Y^{-1}(\lambda) Y'(\lambda)\frac{\partial}{\partial t }M(\lambda)\right)\frac{\d\lambda}{2\pi \i} =  \i(Y_1)_{1,1}.
		$$
	\end{proof}
	This result will be used in the last section to relate $F_n(\vec{x}+t, \vec{\alpha})$ to a distinguished solution of the vector--valued PII hierarchy, whose Lax pair is given in the following section.
	\section{A Lax pair associated to a vector valued PII hierarchy}
	\label{sec3}
	We now introduce a new matrix $\Psi(\lambda) = \Psi(\lambda;n,\vec x + t,\vec\alpha)$ solving a Riemann-Hilbert problem with constant jumps. More specifically, let
	\begin{eqnarray}\label{def:T}
		T^{(1)}(\lambda;n,\vec x+t) &:=& \diag\left(1,{\rm e}^{\i\psi_n(\lambda;x_1 + t)}, \cdots, {\rm e}^{\i\psi_n(\lambda;x_k + t)}\right), \\ 
		T^{(2)}(\lambda;n,\vec x+t) &:=& {\rm e}^{-\frac{\i}{k+1} \sum_{j = 1}^k \psi_n(\lambda;x_j + t)}\mathbf 1_{k+1}
	\end{eqnarray}
	and $T(\lambda) \equiv T(\lambda;n, \vec x+t) := T^{(1)}(\lambda; n, \vec x + t)T^{(2)}(\lambda; n, \vec x +t)$. We define
	\begin{equation}\label{def:Psi}
		\Psi(\lambda;n,\vec x + t, \vec\alpha) := Y(\lambda;n,\vec x + t,\vec\alpha) T(\lambda; n,\vec x + t).
	\end{equation} 
	It is easy to prove that this latter satisfies the following Riemann--Hilbert problem.
	\begin{pb}\label{RHPPsi}
		Find a sectionally--analytic function $\Psi : \mathbb C/ \Gamma \longrightarrow \mathrm{GL}(k+1,\mathbb C)$ such that
		\begin{itemize}
			\item[a)] $\Psi$ has continuous boundary values $\Psi_{\pm}$ as $\lambda \in \Gamma$ is approached from the left $(+)$ or right $(-)$ side, and they are related by
			\begin{equation}
				\begin{array}{rcl}
					\Psi_+(\lambda) &=& \Psi_-(\lambda) \left(\begin{array}{c|ccccc}
						1 & & & -\i \hat\Theta^\top\chi_{\Gamma_-}(\lambda) & & \\
						\hline
						\\
						\\
						
						- \i \hat\Theta\chi_{\Gamma_+}(\lambda)&&& \mathbf 1_k &&
						\\
						\\
						\\
					\end{array}\right),
				\end{array}
			\end{equation}
			where $\hat\Theta := \Theta(\lambda;n,\vec x + t, \vec\alpha)_{| \lambda = 0}$ is the (column) vector 
			\begin{equation}
				\hat\Theta := \Big(\sqrt{\alpha_j - \alpha_{j+1}}\Big)_{j = 1}^k.
			\end{equation}
			\item[b)] $\Psi$ has the following asymptotic behavior 
			\begin{equation}
				\Psi(\lambda) = \Big( \mathbf 1 + Y_1 \lambda^{-1} + \mathcal O(\lambda^{-2}) \Big) T(\lambda), \quad \text{as} \; \lambda \to \infty.
			\end{equation}
		\end{itemize}
	\end{pb}
	The properties of the function $\Psi(\lambda)$ listed above are then used to prove the following proposition.
	\begin{proposition}
		There exists two matrices $A(\lambda) \equiv A(\lambda;n,\vec x + t,\vec\alpha)$ and $B(\lambda) \equiv B(\lambda;n, \vec x + t, \vec\alpha)$, polynomial in $\lambda$, such that
		\begin{equation}\label{LaxPair}
			\begin{cases}
				\ds\frac{\partial}{\partial \lambda} \Psi(\lambda) = A(\lambda)\Psi(\lambda),\\
				\\
				\ds\frac{\partial}{\partial t} \Psi(\lambda) = B(\lambda)\Psi(\lambda).
			\end{cases}
		\end{equation}
		Moreover,
		\begin{equation}\label{eq:B}
			B(\lambda) = \left(\begin{array}{c|ccccc}
				-\ds\frac{\i k}{k+1}\lambda  & & -\i \vec u^\top&  \\
				\hline
				\\
				\i \vec u&&\ds\frac{\i \lambda}{k+1} \mathbf 1_k &
				\\
				\\
			\end{array}\right)
		\end{equation}
		and
		\begin{equation}\label{eq:A}
			A(\lambda) = \sum_{j = 0}^{2n} A_j \lambda^{2n - j} + \hat A_{2n},
		\end{equation}
		with
		\begin{eqnarray}\label{eq:leading last coeff A}
			A_0 &=& \frac{\i}{k+1}\diag\left(- k ,  1 , \ldots, 1 \right),\\ \hat A_{2n} &=& \frac{\i}{k+1}\diag\left(-kt- \sum_{j = 1}^k x_j,t + k x_1 - \sum_{j \neq 1} x_j , \ldots, t +   k x_k - \sum_{j \neq k} x_j \right) \label{eq:leading last coeff Abis}.
		\end{eqnarray}
		and $\vec u$ as in \eqref{Y1}.
	\end{proposition}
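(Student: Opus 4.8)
The plan is to obtain the Lax pair as the logarithmic $\lambda$- and $t$-derivatives of $\Psi$, exploiting that $\Psi$ solves the Riemann--Hilbert problem \ref{RHPPsi} whose jump matrix is \emph{piecewise constant} in $\lambda$ and \emph{independent of} $t$. Concretely, I would set $A(\lambda) := \big(\partial_\lambda \Psi\big)\Psi^{-1}$ and $B(\lambda) := \big(\partial_t\Psi\big)\Psi^{-1}$. Since on each arc of $\Gamma$ the jump $J_\Psi$ is a constant matrix, differentiating $\Psi_+ = \Psi_- J_\Psi$ gives $\partial_\lambda\Psi_+ = (\partial_\lambda\Psi_-)J_\Psi$ and $\partial_t\Psi_+ = (\partial_t\Psi_-)J_\Psi$, whence $A_+ = A_-$ and $B_+ = B_-$ across every arc. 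Thus $A$ and $B$ have no jump and, $\Psi$ being invertible on all of $\C$, they extend to entire functions of $\lambda$ (the apparent singularity at the self-intersection of $\Gamma_+$ and $\Gamma_-$ near the origin is removable because $\Psi$ is bounded and invertible there).

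It then remains to control the growth of $A$ and $B$ at $\lambda=\infty$ and apply Liouville's theorem. Writing $\Psi = W\,T$ with $W(\lambda) := \1 + Y_1\lambda^{-1} + \bigO(\lambda^{-2})$ and $T$ the diagonal matrix of \eqref{def:T}, and using that diagonal matrices commute, one finds
$$A = \big(\partial_\lambda W\big)W^{-1} + W\,\big(T^{-1}\partial_\lambda T\big)\,W^{-1}, \qquad B = \big(\partial_t W\big)W^{-1} + W\,\big(T^{-1}\partial_t T\big)\,W^{-1}.$$
The two diagonal matrices $T^{-1}\partial_\lambda T$ and $T^{-1}\partial_t T$ are \emph{polynomial} in $\lambda$: since $\partial_\lambda\psi_n(\lambda;s) = \lambda^{2n}+s$ and $\partial_t\psi_n(\lambda;s) = \lambda$, they have degrees $2n$ and $1$ respectively, with leading coefficient $\tfrac{\i}{k+1}\diag(-k,1,\dots,1)$ in both cases. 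On the other hand $(\partial_\lambda W)W^{-1} = \bigO(\lambda^{-2})$ and $(\partial_t W)W^{-1} = \bigO(\lambda^{-1})$. Hence $A$ and $B$ are entire functions of polynomial growth, so by Liouville they are polynomials of degree $2n$ and $1$, which already proves the first assertion and the shape \eqref{eq:A}.

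Finally I would read off the coefficients by matching the Laurent expansion at infinity order by order. The top coefficient $A_0$ of $A$ is the leading coefficient of $T^{-1}\partial_\lambda T$, giving \eqref{eq:leading last coeff A}; the constant diagonal piece coming directly from the $s=x_j+t$ part of $\partial_\lambda\psi_n$, which is conjugated trivially by $W$ at order $\lambda^0$, is exactly the $\hat A_{2n}$ of \eqref{eq:leading last coeff Abis}, while the remaining $\lambda^0$ contributions (mixing the leading diagonal with the subleading coefficients of $W$) are absorbed into $A_{2n}$; this is precisely the split recorded in \eqref{eq:A}. For $B$, the degree-one coefficient is again $\tfrac{\i}{k+1}\diag(-k,1,\dots,1)$, and the constant term is the subleading coefficient of $W(T^{-1}\partial_t T)W^{-1}$, namely the commutator $\big[Y_1,\tfrac{\i}{k+1}\diag(-k,1,\dots,1)\big]$; substituting the block form \eqref{Y1} of $Y_1$ produces precisely the off-diagonal blocks $\mp\i\vec u^\top$ and $\pm\i\vec u$ of \eqref{eq:B}.

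The computations above are essentially bookkeeping; the one point requiring genuine care is the first step, establishing that $A$ and $B$ are entire. This rests on the piecewise-constancy of the jump, on the regularity of $\Psi$ up to $\Gamma$ (continuity of boundary values), and on the removability of the singularity at the contour intersection. The identification of the intermediate coefficients $A_1,\dots,A_{2n}$ — which is where the vector-valued Painlevé~II hierarchy \eqref{PIIequationIntro} ultimately originates — is not needed here, since the proposition only asserts the polynomial structure of $A$, its extreme coefficients, and the full expression for $B$; these follow directly from the leading orders of the expansion together with entireness.
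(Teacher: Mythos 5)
Your proposal is correct and takes essentially the same route as the paper: define $A=(\partial_\lambda \Psi)\Psi^{-1}$ and $B=(\partial_t \Psi)\Psi^{-1}$, use the fact that the jump of $\Psi$ is piecewise constant in $\lambda$ and independent of $t$ to conclude both are entire, apply Liouville's theorem through the expansion $\Psi=\big(\1+Y_1\lambda^{-1}+\cdots\big)T$, and identify the coefficients by matching asymptotics at infinity --- your explicit bookkeeping (including $B_0=[Y_1,\tfrac{\i}{k+1}\diag(-k,1,\ldots,1)]$ and the binomial structure $T^{-1}\partial_\lambda T=A_0\lambda^{2n}+\hat A_{2n}$) is exactly what the paper compresses into the phrase ``straightforward computations''. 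One caveat on signs: inserting \eqref{Y1} \emph{as printed} into the commutator yields $(B_0)_{12}=+\i\vec u^\top$ and $(B_0)_{21}=-\i\vec u$, i.e.\ the opposite of \eqref{eq:B}; this discrepancy is not a flaw in your method but is inherited from the paper itself, since the displayed block form \eqref{Y1} is not compatible with the symmetry $D_1 Y_1^\top D_1 = Y_1$ of Remark \ref{remark2.4} from which it is claimed to follow (that symmetry forces the $(1,2)$ block of $Y_1$ to be \emph{minus} the transpose of the $(2,1)$ block), so the relative sign amounts to a convention in the definition of $\vec u$ under which all subsequent results are invariant.
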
 
	\begin{proof}
		For both equations the proof follows applying Liouville theorem and using straightforward computations. For the first equation, we define 
		\[A(\lambda) \coloneqq\ds\frac{\partial}{\partial \lambda} \Psi(\lambda)\left(\Psi(\lambda)\right)^{-1}. \]
		The matrix-valued function $A(\lambda)$ is analytic for every $\lambda \in \mathbb{C}\setminus \Gamma $. Moreover, for $\lambda \in \Gamma$ we have that $A_+(\lambda) = A_-(\lambda) $, thanks to the fact that $\Psi(\lambda)$ has constant jump condition along $\Gamma$. Thus $A(\lambda) $ is entire and behaves like a polynomial in $\lambda$ of degree $2n$ at $\infty$. By Liouville theorem, we conclude that $A(\lambda) $ is a polynomial of degree $2n$ in $\lambda$. Using the asymptotic condition written in equation \eqref{eq:asymp RH}, we can then compute explicitely the leading coefficient and the constant coefficient of $A(\lambda)$ as in \eqref{eq:leading last coeff A}, \eqref{eq:leading last coeff Abis}. For the second equation, in an analogue way we define 
		\[B(\lambda) \coloneqq\ds\frac{\partial}{\partial t} \Psi(\lambda)\left(\Psi(\lambda)\right)^{-1}. \]
		Using the same reasoning, we conclude that $B(\lambda)$ is a polynomial in $\lambda$ of degree $1$ and we compute its coefficients as in \eqref{eq:B}.
	\end{proof}
	Now, we show that the system \eqref{LaxPair} is the Lax pair for the vector-valued Painlevé II hierarchy \eqref{PIIequationIntro}.
	For any $j = 0, \ldots, 2n$, we will denote with $a_j^{11}, a_j^{12}, a_j^{21}, a_j^{22}$ the block-entries of $A_j$ (and the same for $\hat A_{2n}$). Hence, $a_j^{11}$ will be a scalar, $a_j^{12}, a_j^{21}$ will be, respectively, a row and a column vector, and $a_j^{22}$ a square matrix of size $k$. We will now study the compatibility condition
	\begin{equation}\label{LaxEq}
		A(\lambda)B(\lambda) - B(\lambda)A(\lambda) = \frac{\partial B}{\partial \lambda}(\lambda) - \frac{\partial A}{\partial t}(\lambda).
	\end{equation}
	The following two Lemmas are the analogue, for the vector--valued case, of Lemma 5.4 and 5.5 in \cite{BothnerCafassoTarricone} and the tecnique used in their proofs is inspired by the one used in \cite{WE}. The dependence of the variable $\vec u$ and $\{ a_j^{ik} \}$ on $n,t,\vec x, \vec \alpha$ will not be made explicit in the following formulas. We will denote with a dot the derivative with respect to $t$, as this will be the only dynamical variable (the other variables will be considered as parameters). 
	\begin{lemma}\label{lemma:1}
		The compatibility condition of the Lax pair \eqref{LaxPair} is equivalent to the system of equations
		\begin{equation}\label{compatibility1}
			a_{1}^{12} = -\i \vec u^\top, \quad a_1^{21} = \i \vec u,
		\end{equation}
		\begin{equation}\label{compatibility2}
			\begin{cases}
				\dot{a}_j^{11} = -\i(\vec u^\top a_j^{21} + a_j^{12}\vec u), & \dot{a}_{j}^{12} = -\i(a_{j+1}^{12} + \vec u^\top a_j^{22} - a_j^{11}\vec u^\top) \\
				\\
				\dot{a}_j^{22} = \i(\vec ua_j^{12} + a_j^{21}\vec u^\top), & \dot{a}_{j}^{21} = \i(a_{j+1}^{21} + a_j^{11} \vec u - a_j^{22}\vec u)
			\end{cases}, \quad j = 1,\ldots,2n-1
		\end{equation}
		\begin{equation}\label{compatibility3}
			\begin{cases}
				\dot{a}_{2n}^{11} = -\i(\vec u^\top a_{2n}^{21} + a_{2n}^{12}\vec u), & \dot{a}_{2n}^{12} = -\i(\vec u^\top a_{2n}^{22} - a_{2n}^{11}\vec u^\top + \i \vec u^\top m_{t,\x}) \\
				\\
				\dot{a}_{2n}^{22} = \i(\vec ua_{2n}^{12} + a_{2n}^{21}\vec u^\top), & \dot{a}_{2n}^{21} = \i(a_{2n}^{11} \vec u - a_{2n}^{22}\vec u - \i m_{t,\x}\vec u),
			\end{cases}
		\end{equation}
		where $m_{t, \x} := \diag(x + t_1,\ldots,x + t_k)$.
	\end{lemma}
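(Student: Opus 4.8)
The plan is to insert the explicit forms of $A(\lambda)$ and $B(\lambda)$ into the zero-curvature relation \eqref{LaxEq} and to compare the coefficients of each power of $\lambda$. The first thing I would record is that, by \eqref{eq:B} and \eqref{eq:leading last coeff A}, the linear matrix $B$ shares its leading coefficient with $A$: writing $B(\lambda)=A_0\lambda+B_1$, the $\lambda$-coefficient equals $A_0=\frac{\i}{k+1}\diag(-k,1,\ldots,1)$, while
\[
B_1=\begin{pmatrix} 0 & -\i\vec u^\top \\ \i\vec u & 0 \end{pmatrix}.
\]
Two bookkeeping facts then make the right-hand side collapse. Since $\partial_\lambda B=A_0$ and, from the explicit constant term \eqref{eq:leading last coeff Abis}, $\partial_t\hat A_{2n}=A_0$ while $\partial_t A_0=0$, we get
\[
\frac{\partial B}{\partial\lambda}-\frac{\partial A}{\partial t}=A_0-\Big(\sum_{j=1}^{2n}\dot A_j\,\lambda^{2n-j}+A_0\Big)=-\sum_{j=1}^{2n}\dot A_j\,\lambda^{2n-j},
\]
the two copies of $A_0$ cancelling.

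For the left-hand side I would write $[A,B]=\lambda[A,A_0]+[A,B_1]$. Using $[A_0,A_0]=0$ and the fact that $A_0$ and $\hat A_{2n}$ are both diagonal, hence $[\hat A_{2n},A_0]=0$, this expands to
\[
[A,B]=\sum_{j=1}^{2n}[A_j,A_0]\,\lambda^{2n-j+1}+\sum_{j=0}^{2n}[A_j,B_1]\,\lambda^{2n-j}+[\hat A_{2n},B_1].
\]
All the commutators are then evaluated block by block. Writing $A_0=\frac{\i}{k+1}\mathbf 1_{k+1}-\i E$ with $E:=\diag(1,0,\ldots,0)$, the identity part drops out of every bracket with $A_0$, so $[X,A_0]=-\i[X,E]$ sees only the off-diagonal blocks of $X$; and $[B_1,X]$ is read straight off the block products. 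I expect no difficulty in these elementary computations.

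Matching coefficients now yields the three systems in turn. The top power $\lambda^{2n}$ gives $[A_1,A_0]+[A_0,B_1]=0$, whose block form forces $a_1^{12}=-\i\vec u^\top$ and $a_1^{21}=\i\vec u$, i.e. \eqref{compatibility1}. For each intermediate power $\lambda^{2n-j}$ with $1\le j\le 2n-1$ one obtains $\dot A_j=[A_0,A_{j+1}]+[B_1,A_j]$, and expanding the right-hand side into its four blocks reproduces exactly \eqref{compatibility2}; here it is important that the would-be extra term $[\hat A_{2n},A_0]$ at $\lambda^1$ vanishes, as noted above. Finally the constant term $\lambda^0$ gives $\dot A_{2n}=[B_1,A_{2n}]+[B_1,\hat A_{2n}]$, the first summand contributing the same block pattern as \eqref{compatibility2} with $j=2n$.

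The one step that requires genuine care---and the main, if modest, obstacle---is producing the $m_{t,\x}$-dependent terms of \eqref{compatibility3} out of the extra commutator $[B_1,\hat A_{2n}]$. Here I would use the explicit diagonal entries of \eqref{eq:leading last coeff Abis}: denoting by $\hat a^{11}$ the scalar block and by $\hat a^{22}=\diag(\hat a^{22}_{11},\ldots,\hat a^{22}_{kk})$ the lower block, a direct simplification gives
\[
\hat a^{11}-\hat a^{22}_{jj}=\frac{\i}{k+1}\big(-(k+1)(t+x_j)\big)=-\i(t+x_j),
\]
so that $(\hat a^{11}\mathbf 1_k-\hat a^{22})\vec u=-\i\,m_{t,\x}\vec u$ and, after transposition, $\hat a^{11}\vec u^\top-\vec u^\top\hat a^{22}=-\i\,\vec u^\top m_{t,\x}$. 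Substituting these into the off-diagonal blocks of $[B_1,\hat A_{2n}]$ turns them precisely into the $m_{t,\x}$-corrections appearing in the $\dot a_{2n}^{12}$ and $\dot a_{2n}^{21}$ equations of \eqref{compatibility3}, which completes the identification; everything else is routine block-matrix algebra.
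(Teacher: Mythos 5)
Your proof is correct and follows essentially the same route as the paper, which proves the lemma by direct computation, matching the coefficients of $\lambda^{2n}$, $\lambda^{2n-j}$ ($j=1,\dots,2n-1$) and $\lambda^0$ in the compatibility condition \eqref{LaxEq}. Your write-up simply makes explicit the block computations (the splitting $B=A_0\lambda+B_1$, the cancellation of $\partial_\lambda B$ against $\partial_t\hat A_{2n}$, the vanishing of $[\hat A_{2n},A_0]$, and the identity $\hat a^{11}\mathbf 1_k-\hat a^{22}=-\i\,m_{t,\x}$) that the paper leaves to the reader, and all of these check out.
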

	\begin{proof}
		The proof follows by direct computation, exploiting the polynomiality of the matrices $A(\lambda), B(\lambda)$ in $\lambda$. In particular, equation \eqref{compatibility1} corresponds to the term associated to the monomial $\lambda^{2n}$ in the compatibility condition \eqref{LaxEq}. Then system \eqref{compatibility2} corresponds, for every $j=1,\dots,2n-1$, to the monomial $\lambda^{2n-j}$ and, finally, \eqref{compatibility3} corresponds to $\lambda^{0}.$
	\end{proof}
	From equations \eqref{compatibility1}, \eqref{compatibility2}, and the first two equations of \eqref{compatibility3} we can derive some symmetries for the entries of $A_j$ for $j=1,\dots, 2n$. In particular,
	\[\frac{\partial }{\partial t} \tr (a_j^{22})=\tr(\dot{a}_j^{22}) = \i \tr (\vec u a_j^{12} +a_j^{21} \vec u^{\top} ) = \i \tr (\vec u a_j^{12}) +i \tr(a_j^{21} \vec u^{\top} )=  \i a_j^{12} \vec u +i \vec u^{\top} a_j^{21}   = -\dot{a}_j^{11}.\]
	Thus we conclude that 
	\begin{equation}\label{eq:Trace}
		-\tr (a_j^{22}) =  a_j^{11}, \;\; j=1\dots,2n,
	\end{equation}
	up to constant of integration. This constant is actually zero. This can be proven observing that $\lim_{t \to \infty} a_j^{22} = \lim_{t \to \infty} a_j^{11} = 0$. Indeed, both of them are polynomials in the entries of the matrix elements of $Y_i, \; i \geq 1$, and  $Y_i\rightarrow 0 $ for $t\rightarrow +\infty$, which is easily proven using the small norm theorem (see Section \ref{sec:4} below and, in particular, the proof of Proposition \ref{prop:4.1}).	
	Moreover, we can also deduce the following symmetries : 
	\begin{equation}\label{eq:symmetries}
		a^{21}_j = (-1)^j (a^{12}_j)^\top, \;\; a^{22}_j = (-1)^j (a^{22}_j)^{\top} \;\;\; \forall j = 1,\ldots, 2n.
	\end{equation}
	This is proved by induction over $j$ using the equations appearing in the Lemma \ref{lemma:1}.
	\begin{lemma}\label{lemma:2}
		For every $\ell = 1,2,\ldots,2n$
		\begin{equation}
			a_\ell^{11} = -\i \sum_{j = 1}^{\ell - 1} \big(a_j^{11}a_{\ell - j}^{11} + a_j^{12}a_{\ell - j}^{21}\big) \quad \text{and} \quad a_\ell^{22} = \i \sum_{j = 1}^{\ell - 1}\big(a_j^{22}a_{\ell - j}^{22} + a_j^{21}a^{12}_{\ell - j}\big).
		\end{equation}
	\end{lemma}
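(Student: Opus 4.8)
The plan is to bypass the term-by-term induction that Lemma \ref{lemma:1} invites, and instead reduce everything to a single algebraic fact: that $A(\lambda)$ is, up to a scalar multiple of the identity plus terms that decay at infinity, the conjugate of a \emph{constant idempotent}. First I would rewrite $A(\lambda)=\partial_\lambda\Psi\,\Psi^{-1}$ using $\Psi=YT$, which gives
\[
A(\lambda)=(\partial_\lambda Y)Y^{-1}+Y\big((\partial_\lambda T)T^{-1}\big)Y^{-1}.
\]
Since $\partial_\lambda\psi_n(\lambda;s)=\lambda^{2n}+s$, the diagonal matrix $(\partial_\lambda T)T^{-1}$ is a polynomial with only two nonzero coefficients, namely $\lambda^{2n}A_0+\hat A_{2n}$ with $A_0=\frac{\i}{k+1}\diag(-k,1,\ldots,1)$ as in \eqref{eq:leading last coeff A} and constant term equal to the matrix $\hat A_{2n}$ of \eqref{eq:leading last coeff Abis}. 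The crucial elementary observation is $\diag(-k,1,\ldots,1)=\mathbf 1_{k+1}-(k+1)P$, where $P:=\diag(1,0,\ldots,0)$, so that $A_0=\frac{\i}{k+1}\mathbf 1_{k+1}-\i P$ and hence
\[
A(\lambda)=\frac{\i}{k+1}\lambda^{2n}\mathbf 1_{k+1}-\i\lambda^{2n}\Pi(\lambda)+Y\hat A_{2n}Y^{-1}+(\partial_\lambda Y)Y^{-1},\qquad \Pi(\lambda):=Y(\lambda)\,P\,Y(\lambda)^{-1}.
\]

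Next I would extract Laurent coefficients. Writing $Y=\mathbf 1_{k+1}+\sum_{m\geq1}Y_m\lambda^{-m}$, the term $(\partial_\lambda Y)Y^{-1}$ is $\bigO(\lambda^{-2})$ while $Y\hat A_{2n}Y^{-1}=\hat A_{2n}+\bigO(\lambda^{-1})$. Matching the coefficient of $\lambda^{2n-\ell}$ for $\ell=1,\ldots,2n$, the first term contributes only at $\ell=0$, the decaying term $(\partial_\lambda Y)Y^{-1}$ contributes nothing for $\ell\leq 2n$, and the constant term $\hat A_{2n}$ coming from $Y\hat A_{2n}Y^{-1}$ at $\ell=2n$ is precisely the separate constant term displayed in the expansion \eqref{eq:A}. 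Therefore I obtain the clean relation
\[
A_\ell=-\i\,\Pi_\ell,\qquad \ell=1,\ldots,2n,
\]
where $\Pi(\lambda)=\sum_{m\geq0}\Pi_m\lambda^{-m}$ with $\Pi_0=P$.

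Finally I would exploit that $\Pi(\lambda)^2=Y P^2 Y^{-1}=\Pi(\lambda)$, i.e. $\Pi$ is idempotent for every $\lambda$. Comparing the coefficient of $\lambda^{-m}$ in $\Pi^2=\Pi$ gives $\sum_{p=0}^m\Pi_p\Pi_{m-p}=\Pi_m$; isolating the $p=0$ and $p=m$ terms and using $\Pi_0=P$ yields $\Pi_m-P\Pi_m-\Pi_m P=\sum_{p=1}^{m-1}\Pi_p\Pi_{m-p}$. Reading off the $(1,1)$ and $(2,2)$ blocks, the left-hand side collapses to $-\pi_m^{11}$ and $\pi_m^{22}$ respectively (because $P$ annihilates all but the first row and column), so that
\[
-\pi_m^{11}=\sum_{p=1}^{m-1}\big(\pi_p^{11}\pi_{m-p}^{11}+\pi_p^{12}\pi_{m-p}^{21}\big),\qquad \pi_m^{22}=\sum_{p=1}^{m-1}\big(\pi_p^{21}\pi_{m-p}^{12}+\pi_p^{22}\pi_{m-p}^{22}\big).
\]
Substituting $\Pi_\ell=\i A_\ell$ blockwise (so $\pi_\ell^{ab}=\i\,a_\ell^{ab}$) and simplifying the powers of $\i$ turns these two identities into exactly the two formulas of the Lemma, the order of the summands being immaterial.

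The main obstacle, and the only point demanding genuine care, is the first reduction: verifying that the auxiliary pieces $(\partial_\lambda Y)Y^{-1}$ and $Y\hat A_{2n}Y^{-1}$ do not contaminate the coefficients $A_\ell$ for $\ell\leq 2n$. This hinges on the precise decay orders above together with the bookkeeping identity that the constant part of $(\partial_\lambda T)T^{-1}$ coincides with $\hat A_{2n}$ of \eqref{eq:leading last coeff Abis}, which one checks directly from the definitions of $M_0,\ldots,M_k$. Once $A_\ell=-\i\Pi_\ell$ is secured, the remainder is a purely formal unravelling of idempotency and, notably, does not use the compatibility equations of Lemma \ref{lemma:1} at all. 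An alternative but less transparent route is to prove both formulas simultaneously by induction on $\ell$, differentiating the proposed right-hand sides in $t$, substituting \eqref{compatibility1}, \eqref{compatibility2} and the symmetries \eqref{eq:symmetries}, and fixing the constants of integration through the $t\to+\infty$ limit exactly as in the argument preceding \eqref{eq:Trace}; I expect the idempotent argument to be considerably cleaner.
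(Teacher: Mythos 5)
Your proof is correct, and it takes a genuinely different route from the paper's. The paper never introduces the projector $\Pi=YPY^{-1}$: instead it squares the Lax matrix, $C:=A^2$, derives from the compatibility condition \eqref{LaxEq} the equation $\dot C=[B,C]+(\partial_\lambda B)A+A(\partial_\lambda B)$, and then proves by induction on $\ell$ --- relying on the ODE system of Lemma \ref{lemma:1}, on \eqref{eq:Trace}, and on integration constants fixed through \eqref{definition entries of C} --- that the block entries of $C_\ell$ are proportional (with factor $\i\tfrac{1-k}{k+1}$) to those of $A_\ell$; the recursion then falls out by comparing this proportionality with the convolution formula expressing $c_\ell^{22}$ in terms of the $a_j^{ab}$. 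Your argument compresses all of this into the single identity $A_\ell=-\i\,\Pi_\ell$ for $\ell=1,\dots,2n$ plus the Laurent expansion of $\Pi^2=\Pi$; I checked the block bookkeeping ($\Pi_m-P\Pi_m-\Pi_m P$ collapsing to the block-diagonal matrix with entries $-\pi_m^{11}$ and $\pi_m^{22}$, and the powers of $\i$ after substituting $\pi_\ell^{ab}=\i a_\ell^{ab}$), and it reproduces exactly the two stated formulas; notably Lemma \ref{lemma:1} is never used, and no constants of integration ever appear. What you gain is a conceptual explanation of the quadratic recursion (it is literally idempotency of a dressed projector) and a much shorter proof. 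What it costs is a stronger analytic input: condition b) of Riemann--Hilbert Problem \ref{pb:main rhpb} only posits $Y=\mathbf 1+Y_1\lambda^{-1}+\mathcal O(\lambda^{-2})$, whereas your coefficient matching needs the complete, term-by-term differentiable expansion $Y\sim\mathbf 1+\sum_{m\geq 1}Y_m\lambda^{-m}$ at least through order $\lambda^{-2n}$, both to define $\Pi_1,\dots,\Pi_{2n}$ and to justify $(\partial_\lambda Y)Y^{-1}=\mathcal O(\lambda^{-2})$. This does hold here, because the jump matrices tend to the identity super-exponentially along $\Gamma_\pm$, so the standard singular-integral representation of $Y$ upgrades the one-term expansion to a full one; you should state this explicitly, since it is the only point at which your reduction could leak, and it is precisely the input the paper's route avoids (the paper extracts everything beyond $Y_1$ from the ODE structure rather than from higher coefficients $Y_m$). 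Finally, a by-product of your computation worth flagging: with the convention \eqref{Y1}, the identity $A_1=-\i[Y_1,P]$ yields $a_1^{12}=+\i\vec u^{\top}$ and $a_1^{21}=-\i\vec u$, which differs in sign from \eqref{compatibility1} and \eqref{eq:B}; this is a sign inconsistency internal to the paper's conventions and is immaterial for the present lemma, whose statement involves only the coefficients $a_\ell^{ab}$ themselves.
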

	\begin{proof}
		Thanks to equation \eqref{eq:Trace}, we only need to prove the formula for $a_\ell^{22}$. Consider the matrix
		\[C \coloneqq A^2 =
		\left(\begin{array}{c|ccc}
			-\frac{k^2}{(k+1)^2}& & 0 &\\
			\hline
			&&&\\
			0^{\top} & & -\dfrac{\1_k}{(k+1)^2} &\\
			&&&
		\end{array}\right)\lambda^{4n} + \sum_{\ell=1}^{4n}\lambda^{4n-\ell}C_{\ell},\]
		where $C_{\ell} = \sum_{j = 0}^{\ell}A_j A_{\ell-j}$ for $\ell=1,\dots, 2n-1$ and $C_{2n} = \sum_{j = 0}^{2n}A_j A_{2n-j}+A_0\hat{A}_{2n}+\hat{A}_{2n}A_0$ (these are the only coefficients we are going to use in the proof). We can write $C_{\ell}$, for every value of $\ell$, in the usual block-form 
		\[C_{\ell} \coloneqq  \left(\begin{array}{c|ccc}
			c_{\ell}^{11} & & c_{\ell}^{12} &\\
			\hline
			&&&\\
			c_{\ell}^{21} & & c_{\ell}^{22}  &\\
			&&&
		\end{array}\right),\]
		and, in particular, each block-entry can be written in terms of the block entries of  the matrix $A$
		\begin{equation}\label{definition entries of C}
			\begin{cases}
				c_{\ell}^{11} = \sum_{j = 0}^{\ell} \left( a_j^{11} a_{\ell-j}^{11} + a_j^{12}a_{\ell-j}^{21} \right) +  2a_0^{11}\hat{a}_{2n}^{11}\delta_{\ell, 2n}\\
				c_{\ell}^{22}= \sum_{j = 0}^{\ell}  \left( a_j^{21} a_{\ell-j}^{12} + a_j^{22}a_{\ell-j}^{22} \right)+ (a_0^{22}\hat{a}_{2n}^{22}+ \hat{a}_{2n}^{22}a_0^{22})\delta_{\ell, 2n}\\ 
				c_{\ell}^{12}= \sum_{j = 0}^{\ell}  \left( a_j^{11} a_{\ell-j}^{12}+ a_j^{12}a_{\ell-j}^{22} \right)\\
				c_{\ell}^{21}= \sum_{j = 0}^{\ell} \left( a_j^{21} a_{\ell-j}^{11} + a_j^{22}a_{\ell-j}^{21} \right)
			\end{cases} \quad \ell = 1,\dots,2n.
		\end{equation}
		From the compatibility condition \eqref{LaxEq}, we deduce the following equation for the matrix $C$
		\begin{equation}\label{newlaxeq}
			\frac{\partial C}{\partial t}(\lambda) = B(\lambda) C(\lambda) - C(\lambda) B(\lambda) + \frac{\partial B}{\partial \lambda}(\lambda) A(\lambda) + A(\lambda)\frac{\partial B}{\partial \lambda}(\lambda).
		\end{equation}
		In particular, by looking at the coefficients of the powers $\lambda^{m}$ with $m=4n,\dots, 2n$ in equation \eqref{newlaxeq}, we obtain the following system of difference and differential equations for the block entries of each coefficient $C_{\ell}$
		\begin{equation}\label{diff eq for C}
			\begin{cases}
				\dot{c}_{\ell}^{11} = -\i(\vec u^\top c_{\ell}^{21} + c_{\ell}^{12}\vec u) + 2(a_0^{11})^2\delta_{\ell,2n}, & \dot{c}_{\ell}^{12} = -\i(c_{\ell+1}^{12} + \vec u^\top c_{\ell}^{22} - c_{\ell}^{11}\vec u^\top) \\
				\\
				\dot{c}_{\ell}^{22} = \i(\vec u c_{\ell}^{12} + c_{\ell}^{21}\vec u^\top) + 2(a_0^{22})^2 \delta_{\ell, 2n}, & \dot{c}_{\ell}^{21} = \i(c_{\ell+1}^{21} + c_{\ell}^{11} \vec u - c_{\ell}^{22}\vec u)
			\end{cases}, \quad \ell = 1,\ldots,2n.  
		\end{equation}
		Note that this is almost the same system satisfied by the matrix  elements of $A$, see \eqref{compatibility2}. In particular, the equations giving the entries $(1,2), (2,1)$ in \eqref{compatibility2} and \eqref{diff eq for C} are exactly the same, while for the entries $(2,2), (1,1)$ the two sets of equations differ just for $\ell=2n$.


        
        Now, by using the equations above, we first prove that the entries of $C_{\ell}$, for $\ell=1,\dots,2n$ are multiple of the ones of $A_{\ell}$, by induction over $\ell.$ Keeping in mind that the coefficient $C_0$ is explicitely written in the definition of the matrix $C$, we start by computing $C_1$:
		\begin{equation}
			c_1^{12} =\i \frac{1-k}{k+1}  a_1^{12}, \;\;  c_1^{21} =\i\frac{1-k}{k+1}a_1^{21}, \;\; c_1^{11} = 0=\i\frac{1-k}{k+1}a_1^{11}, \;\; c_1^{22} =0_k=\i\frac{1-k}{k+1}a_1^{22}. 
		\end{equation}
		We suppose now that the equation above holds for $\ell$ and we prove than it holds for $\ell+1$ too, by using the equations \eqref{diff eq for C} together with equations \eqref{compatibility1}, \eqref{compatibility2}. In particular, from the third equation in the system \eqref{diff eq for C} we recover $c_{\ell+1}^{12}$ as
		\begin{equation}
			c_{\ell+1}^{12} = \i  \dot{c}_{\ell}^{12} -\vec u^{\top}  c_{\ell}^{22}+ c_{\ell}^{11}\vec u^{\top}= \i \frac{1-k}{k+1}\underbracket{\left( \i \dot{a}_{\ell+1}^{12}-\vec u^{\top}a_{\ell}^{22}+a_{\ell}^{11} \vec u^{\top}\right)}_{=a_{\ell+1}^{12}},
		\end{equation}
		 using the induction hypothesis and the equation for $a_{\ell+1}^{12}$ in \eqref{compatibility2}. The same procedure can be applied for the block entry $c_{\ell+1}^{21},$ to get the analogue result. Now, for the diagonal entries, we use instead the first two equations in the system \eqref{diff eq for C} and obtain
		\begin{equation}
			\dot{c}_{\ell+1}^{22} =  \i\left( \vec u c_{\ell+1}^{12}+c_{\ell+1}^{21}\vec u^{\top}\right) = \i \frac{1-k}{k+1}\underbracket{\i \left( \vec u a_{\ell+1}^{12}+a_{\ell+1}^{21}\vec u^{\top}\right) }_{=\dot{a}_{\ell+1}^{22}}.
		\end{equation}
		Thus we conclude that $c_{\ell+1}^{22} =\i \frac{1-k}{k+1}a_{\ell+1}^{22}$ (from the equation above, this is true up to  a constant of integration, that can be fixed to zero thanks to equations \eqref{definition entries of C}). The same can be done for $c_{\ell+1}^{11}$. Thus the proportionality relation between the block entries of type $(1,2) $ or $(2,1)$ of $A_{\ell}$ and of $C_{\ell}$  is proved for $\ell=1\dots,2n-1$, while for the diagonal block entries it holds for  $\ell=1\dots,2n-2$. For $\ell =2n-1$ we have for the diagonal block entries
		\begin{equation}
		\dot{c}_{2n}^{22} = \i \frac{1-k}{k+1}\dot{a}_{2n}^{22} +2 (a_0^{22})^2, 
		\end{equation}
		and an analogue relation for $a_{2n}^{11}$.
		Now, by using the first two equations in the system \eqref{definition entries of C} for $c_{\ell}^{22} $, we deduce the following chain
		\begin{equation}
			\i\frac{1-k}{k+1}a_{\ell}^{22}= c_{\ell}^{22}=\sum_{j = 1}^{\ell-1}  \left( a_j^{21} a_{\ell-j}^{12} + a_j^{22}a_{\ell-j}^{22} \right)+\underbracket{a_0^{21} a_{\ell}^{12} + a_0^{22}a_{\ell}^{22}+a_{\ell}^{21} a_{0}^{12} + a_{\ell}^{22}a_{0}^{22}}_{= \i \frac{2}{k+1} a_{\ell}^{22}}
		\end{equation}
		for $\ell =1,\dots,2n-1$, from which the statement for $a_{\ell}^{22}$ is directly obtained. For the case $\ell=2n$ the chain obtained by replacing the equation  \eqref{definition entries of C} for $c_{2n}^{22} $ is the following one
		\begin{equation}
		\i\frac{1-k}{k+1}\dot{a}_{2n}^{22}+ 2(a_0^{22})^2 = \dot{c}_{2n}^{22} =  \dfrac{d}{dt}\left(\sum_{j = 0}^{\ell}  \left( a_j^{21} a_{\ell-j}^{12} + a_j^{22}a_{\ell-j}^{22} \right)\right) + 2(a_0^{22})^2.
		\end{equation}
		Thus, by simplifying both sides the term $2\big(a_0^{22}\big)^2$ and then integrating, the formula for $a_{2n}^{22}$ is obtained as for the previous values of $\ell$.
	\end{proof}
	Combining the two lemmas \ref{lemma:1} and \ref{lemma:2}, we are able to express all the coefficients of the Lax matrix $A(\lambda)$ in function of the vector $u$ and its derivatives.
	\begin{proposition}
		The entries of the matrix $A(\lambda)$ are differential polynomials on $u$, given recursively by the formulas
		\begin{equation}
			a_{1}^{21} = \i \vec u, \quad a_1^{22} = 0,
		\end{equation}
		\begin{equation} \label{recursion}
			a_{j + 1}^{21} = -\i \dot{a}_j^{21} - a_j^{11} \vec u + a_j^{22}\vec u, \quad a_{j+1}^{22} = \i \sum_{\ell = 1}^j \big(a_\ell^{22}a_{j + 1 -\ell}^{22} + a_\ell^{21}a^{12}_{j + 1 - \ell}\big), \quad j=1,\dots,2n
		\end{equation}
		together with \eqref{eq:Trace} and \eqref{eq:symmetries}.
	\end{proposition}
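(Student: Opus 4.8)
The plan is to obtain the proposition as a direct corollary of Lemmas~\ref{lemma:1} and~\ref{lemma:2}, read together with the trace relation~\eqref{eq:Trace} and the symmetries~\eqref{eq:symmetries}; no genuinely new computation is needed beyond reorganising what those results already provide. I would begin with the base case. The identity $a_1^{21} = \i\vec u$ is exactly the second half of~\eqref{compatibility1}, while $a_1^{22} = 0$ is Lemma~\ref{lemma:2} at $\ell = 1$, where the sum on the right-hand side is empty. These anchor an induction on $j$.

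Next I would produce the two recursions. The formula for $a_{j+1}^{22}$ is simply Lemma~\ref{lemma:2} evaluated at $\ell = j+1$. The formula for $a_{j+1}^{21}$ I would obtain by solving the fourth equation of the system~\eqref{compatibility2}, namely $\dot a_j^{21} = \i\big(a_{j+1}^{21} + a_j^{11}\vec u - a_j^{22}\vec u\big)$, for $a_{j+1}^{21}$; multiplying by $-\i$ gives precisely $a_{j+1}^{21} = -\i\dot a_j^{21} - a_j^{11}\vec u + a_j^{22}\vec u$, valid for $j = 1,\ldots,2n-1$. At the endpoint $j = 2n$ the governing relation is~\eqref{compatibility3} rather than~\eqref{compatibility2}; feeding $\dot a_{2n}^{21}$ from~\eqref{compatibility3} into the same algebraic manipulation yields the auxiliary term $a_{2n+1}^{21} = -\i\, m_{t,\x}\vec u$, which is the quantity that will encode the right-hand side of the PII equation~\eqref{PIIequationIntro}. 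This is why~\eqref{recursion} is legitimately stated up to $j = 2n$.

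With the recursions established, I would prove by induction on $j$ that every block entry of $A_j$ is a differential polynomial in $\vec u$. The key structural point, which makes the recursion self-contained, is that the families $a^{21}$ and $a^{22}$ determine everything else: \eqref{eq:symmetries} gives $a_j^{12} = (-1)^j (a_j^{21})^\top$ and \eqref{eq:Trace} gives $a_j^{11} = -\tr(a_j^{22})$. Assuming $a_\ell^{21}, a_\ell^{22}$ are differential polynomials in $\vec u$ for all $\ell \leq j$, the expression for $a_{j+1}^{21}$ involves only $\dot a_j^{21}$, the scalar $-\tr(a_j^{22})$, $a_j^{22}$ and $\vec u$, and the expression for $a_{j+1}^{22}$ is a finite sum of products of $a_\ell^{22}$ with $a_{j+1-\ell}^{22}$ and of $a_\ell^{21}$ with $(-1)^{j+1-\ell}(a_{j+1-\ell}^{21})^\top$; both are therefore differential polynomials, closing the induction.

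The main obstacle is not any single computation but verifying this closure uniformly, and in particular reconciling the endpoint $\ell = 2n$, where the constant coefficient $\hat A_{2n}$ and the matrix $m_{t,\x}$ enter the picture. One must confirm that the closed formula of Lemma~\ref{lemma:2} and the recursion coming from~\eqref{compatibility2} and~\eqref{compatibility3} remain compatible at $\ell = 2n$, without the extra $\delta_{\ell,2n}$ corrections disturbing the differential-polynomial structure. This is precisely the cancellation already secured inside the proof of Lemma~\ref{lemma:2}, so I would cite it rather than redo it, and conclude.
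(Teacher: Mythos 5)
Your proof is correct and takes essentially the same route as the paper: there the proposition is presented, without further argument, as the direct combination of Lemmas \ref{lemma:1} and \ref{lemma:2} with \eqref{eq:Trace} and \eqref{eq:symmetries}, which is exactly the reorganisation you carry out (base case from \eqref{compatibility1} and the empty sum in Lemma \ref{lemma:2}, the $a^{21}$-recursion by solving the fourth equation of \eqref{compatibility2}, the $a^{22}$-recursion from Lemma \ref{lemma:2}, and closure of the induction via $a_j^{11}=-\tr(a_j^{22})$ and $a_j^{12}=(-1)^j(a_j^{21})^\top$). Your explicit handling of the endpoint $j=2n$ through \eqref{compatibility3}, yielding the auxiliary quantity $a_{2n+1}^{21}=-\i\, m_{t,\x}\vec u$, makes precise a point the paper leaves implicit.
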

	

	Moreover, using \eqref{recursion} together with \eqref{compatibility3} and the definition of the operators $\mathcal L_\pm^{\vec u}$, we prove that the vectors $a_j^{21}, \quad j = 1, \ldots,2n$ satisfy the recursion
	\begin{equation}\label{recursion2}
		a^{21}_{2j + 1} = -\mathcal L_+^{\vec u} a^{21}_{2j}, \quad a^{21}_{2j} =  -\mathcal L_-^{\vec u} a^{21}_{2j-1}, \quad j = 1,\ldots, n-1,
	\end{equation}
	while $a^{21}_{2n}$ satisfies the differential equation
	\begin{equation}
		-\mathcal L_+^{\vec u} a_{2n}^{21}  = -\i m_{t,\x}\vec u.
	\end{equation}
	Hence, we proved recursively that $\vec u$, as defined in \eqref{Y1}, satisfies the equation \eqref{PIIequationIntro}. Moreover, the entries of $\vec u$ are real/purely imaginary depending on the sign of $(\alpha_{j + 1} - \alpha_j)$, as already observed in Remark \ref{remark2.4}. We are now left with the proofs of the equations \eqref{TWformula} and \eqref{asymptoticsIntro}.
	
	\section{The logarithmic derivative of $F_n(\x,\vec\alpha)$}\label{sec:4}
	In this last section we finish the  proof of Theorem \ref{thm:main} giving the relation between $F_n(\x,\vec\alpha)$ and a particular solution $\vec u$ of the vector-valued Painlevé II hierarchy. The main ingredient is the first logarithmic derivative of the Fredholm determinant $F_n(\vec{x}+t,\vec{\alpha})$, computed at the end of Section \ref{sec:2}. 
	
	\begin{proposition}\label{prop:4.1}
	The distribution $F_n$ defined in \eqref{def:genfunction} satisfies the equation
		\begin{equation}\label{second log der}
			\frac{\partial ^{2}}{\partial t ^{2}}F_n(\vec{x}+t, \vec{\alpha}) = - <\vec u(t), \vec u(t)>
		\end{equation}
		and its integrated version
		\begin{equation}\label{formula for fred det}
			F_n(\vec x, \vec \alpha) =\exp\left( - \int_{0}^\infty t <\vec u(t),\vec u(t)> \mathrm d t \right),
		\end{equation}
		where $\vec u(t) = \vec u(n,\vec x + t,\vec \alpha)$ satisfy the following (vector-valued) ordinary differential equation
		\begin{equation}\label{PIIequation}
			(\mathcal L^{\vec u}_+\mathcal L^{\vec u}_-)^n \vec u(t) = -\diag(x_1 + t,\ldots,x_k +t) \vec u(t)
		\end{equation}
		and have the following behavior at $+ \infty$
		\begin{equation}\label{asymptotics}
			\vec u(n,\vec x + t, \vec \alpha) = \Bigg(\sqrt{\alpha_j - \alpha_{j+1}}\Ai_n(t + x_j)\big(1 + o(1)\big)\Bigg)_{j = 1,\ldots,k}.
		\end{equation}
	\end{proposition}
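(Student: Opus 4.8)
The plan is to derive the second logarithmic derivative \eqref{second log der} by differentiating once more the identity \eqref{eq: ds fred det}---which, as its proof shows, actually computes the logarithmic derivative $\partial_t \log F_n = \i(Y_1)_{1,1}$---and then to integrate twice against boundary data read off from the behaviour of the Riemann--Hilbert problem \ref{pb:main rhpb} as $t \to +\infty$. The differential equation \eqref{PIIequation} is already a consequence of the Lax-pair compatibility established at the end of Section \ref{sec3}, so only \eqref{second log der}, \eqref{formula for fred det} and \eqref{asymptotics} remain to be proved.

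For the second derivative I would use the $t$-component of the Lax pair \eqref{LaxPair}. Writing $\Psi = Y T$ as in \eqref{def:Psi} and noting that $\partial_t T\, T^{-1} = \lambda A_0$ (a direct computation, already carried out in the proof of Proposition \ref{prop: first log derivative}), one obtains
\[B(\lambda) = (\partial_t Y)\,Y^{-1} + \lambda\, Y A_0 Y^{-1}.\]
Since $B$ is a first-degree polynomial in $\lambda$ (see \eqref{eq:B}), I would substitute $Y = \1_{k+1} + Y_1\lambda^{-1} + Y_2\lambda^{-2} + \bigO(\lambda^{-3})$ and set the coefficient of $\lambda^{-1}$ equal to zero, which gives
\[\partial_t Y_1 = Y_1 A_0 Y_1 - A_0 Y_1^2 + [A_0, Y_2].\]
As $A_0 = \frac{\i}{k+1}\diag(-k, 1, \ldots, 1)$ is diagonal, the $(1,1)$-entry of $[A_0, Y_2]$ vanishes, while the block structure of $Y_1$ in \eqref{Y1} gives $\big(Y_1 A_0 Y_1 - A_0 Y_1^2\big)_{1,1} = \sum_{j=1}^k u_j^2\,\big((A_0)_{j+1,j+1} - (A_0)_{1,1}\big) = \i <\vec u, \vec u>$. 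Thus $(\partial_t Y_1)_{1,1} = \i <\vec u, \vec u>$, and differentiating \eqref{eq: ds fred det} yields $\partial_t^2 \log F_n = \i (\partial_t Y_1)_{1,1} = -<\vec u, \vec u>$, which is \eqref{second log der}.

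Next I would pin down the boundary conditions by a small-norm analysis of Problem \ref{pb:main rhpb}. Using the factorisation \eqref{factoriz jump} and the decay of ${\rm e}^{\i\psi_n(\lambda; t+x_j)}$ along $\Gamma_+$ (and of its conjugate along $\Gamma_-$) for $t + x_j$ large---whose sign is fixed by the admissible directions defining $\Gamma_\pm$---the jump matrix tends to $\1_{k+1}$ exponentially fast, so $Y \to \1_{k+1}$, $Y_1 \to 0$ and $F_n(\vec x + t, \vec\alpha) \to 1$ as $t \to +\infty$. In particular $\partial_t \log F_n = \i(Y_1)_{1,1} \to 0$. Integrating \eqref{second log der} from $t$ to $+\infty$ then gives $\partial_t \log F_n = \int_t^\infty <\vec u(s), \vec u(s)> \d s$, and integrating once more from $0$ to $+\infty$ and exchanging the order of integration produces
\[\log F_n(\vec x, \vec\alpha) = -\int_0^\infty s\, <\vec u(s), \vec u(s)> \d s,\]
which is \eqref{formula for fred det}; convergence of the integral is guaranteed by the rapid decay of $\vec u$ coming from \eqref{asymptotics}.

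The asymptotics \eqref{asymptotics} then come from the leading term of the same analysis: $Y_1 = -\frac{1}{2\pi\i}\int_\Gamma \big(J_Y(\lambda) - \1_{k+1}\big)\,\d\lambda + o(1)$, whose $(j+1, 1)$-entry is $\frac{1}{2\pi}\sqrt{\alpha_j - \alpha_{j+1}}\int_{\Gamma_+} {\rm e}^{\i\psi_n(\lambda; t + x_j)}\,\d\lambda$, which equals $\sqrt{\alpha_j - \alpha_{j+1}}\,\Ai_n(t+x_j)$ by \eqref{defalternAi}; since $u_j = (Y_1)_{j+1,1}$ this is exactly \eqref{asymptotics}, and the reality or pure-imaginarity of each $u_j$ is the content of Remark \ref{remark2.4}. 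The hard part will be the large-$t$ analysis underlying this last paragraph and the boundary data: one must make the small-norm estimates uniform in $\lambda$, which requires careful control of the sign of $\Im\psi_n(\lambda; t + x_j)$ along all of $\Gamma_\pm$ (possibly after deforming to an admissible contour) together with a verification that the neglected contributions to $Y_1$ are genuinely of lower order than the generalized Airy terms.
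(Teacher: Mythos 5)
Your proposal is correct and follows essentially the same route as the paper: equation \eqref{second log der} is obtained from the vanishing of the $\lambda^{-1}$ coefficient of $(\partial_t\Psi)\Psi^{-1}=B(\lambda)$ (your explicit computation with $Y_2$ and the commutator $[A_0,Y_2]$ is just a spelled-out version of the paper's one-line argument), the boundary data come from the same small-norm analysis after deforming $\Gamma_\pm$ to control $\Im\psi_n$, and the Airy asymptotics \eqref{asymptotics} follow from the leading term $Y_1\approx-\frac{1}{2\pi\i}\int_\Gamma\bigl(J_Y(\lambda)-\mathbf 1_{k+1}\bigr)\,\d\lambda$, which is exactly what the paper extracts from the integral equation for the rescaled solution $X(w)=Y(wt^{1/2n})$.
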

	\begin{proof}
		We start computing the $\lambda^{-1}$--term in the asymptotic expansion of $(\partial_t \Psi)\Psi^{-1}$ for $\lambda \to \infty$,  where $\Psi$ is the solution of the Riemann-Hilbert problem \ref{RHPPsi} with constant jump condition. The $(1,1)$--entry of this term, which is equal to zero because $B(\lambda)$ is polynomial in $\lambda$, leads to
		$$ \frac{\partial}{\partial t }(Y_1)_{11} - \i \vec u^{\top}\vec u = 0$$
		and this equation, together with \eqref{eq: ds fred det}, gives \eqref{second log der}. 
		
		We describe now  the boundary behavior of $\vec u$ for $t \rightarrow +\infty$. We start by proving that the jump matrix $J_Y(\lambda)$ of the Riemann--Hilbert problem \ref{pb:main rhpb}, for $t \to +\infty$, behaves like the identity matrix, so that the small norm theorem can be applied. We consider a rescaled complex variable $w$ defined through the equation $\lambda=wt^{\frac{1}{2n}}$, so that the entries $(1,j+1)$ and $(j+1,1)$ of $J_Y(\lambda)$ for $j=1,\dots,k$ are rewritten respectively as
		\begin{equation}
			\begin{aligned}
				&-\i \left(\Theta (\vec{x}+t,\vec{\alpha}, -\lambda)\right)_j\chi_-(\lambda) = -\sqrt{\alpha_j-\alpha_{j+1}} \e ^ {-\i t^{\frac{2n+1}{2n}}\big(\frac{1}{2n+1}w^{2n+1}+(1 +\frac{x_j}{t})w\big)}\chi_{\Gamma_-}\big(wt^{\frac{1}{2n}}\big),\\
				&-\i\left(\Theta (\vec{x}+t,\vec{\alpha},\lambda)\right)_j\chi_+(\lambda) = -\sqrt{\alpha_j-\alpha_{j+1}} \e ^ {\i t^{\frac{2n+1}{2n}}\big(\frac{1}{2n+1}w^{2n+1}+(1 +\frac{x_j}{t})w\big)}\chi_{\Gamma_+}\big(wt^{\frac{1}{2n}}\big).
			\end{aligned}
		\end{equation}
		Note that the quantity $d_j\coloneqq 1+\frac{x_j}{t}$ is bounded in the regime $t\rightarrow \infty$ for any fixed $x_j$, as it converges to $1$. Thus we can modify the curves $\Gamma_-$ and $\Gamma_+$ into $\tilde{\Gamma}_-$ and $\tilde{\Gamma}_+$ (as done in \cite{CafassoClaeysGirotti}, Section 3) so  that 
		\begin{equation}
			\mathfrak{I} \left( \frac{w^{2n+1}}{2n+1}+d_j w\right)< 0, \;\;\; \text{for } \;\;\; w \in \tilde{\Gamma}_-\;\;\text{and} \;\;\mathfrak{I} \left( \frac{w^{2n+1}}{2n+1}+d_j w\right)> 0, \;\;\; \text{for } \;\;\; w \in \tilde{\Gamma}_+.
		\end{equation}
		In this way we obtain 
		\begin{equation}
			\big \| J_Y(wt^{\frac{1}{2n}}) - \1_{k+1} \big\|_{\infty} = \sqrt{\kappa_j-\kappa_{j+1}}\sup_{w\in \tilde{\Gamma}_{\pm}}\e ^ {\pm t^{\frac{2n+1}{2n}}\mathfrak{I}\left(\frac{w^{2n+1}}{2n+1}+d_j w\right)} \rightarrow 0
		\end{equation}
		for $t\rightarrow+ \infty$ and any fixed $x_j\in \mathbb{R}$. Thus, the rescaled function $X (w) \coloneqq Y \big(wt^{\frac{1}{2n}}\big)$, thanks to the  Riemann-Hilbert problem \ref{pb:main rhpb}, satisfies the following conditions:
		\begin{itemize}
			\item[--] it is analytic on $\mathbb{C}\setminus \tilde{\Gamma}_- \cup \tilde{\Gamma}_+$;
			\item[--] it admits continuous boundary values $X_{\pm}$ while approaching from the left or from the right of the curves $\tilde{\Gamma}_- \cup \tilde{\Gamma}_+$ and they are related through $X_+(w)=X_-(w)J_Y \big(wt^{\frac{1}{2n}}\big)$ for all $ w $ along the curves;
			\item[--] for $\vert w\vert \rightarrow \infty $ it behaves like the identity matrix, i.e. $X(w) \sim \1_{k+1} + \sum_{j\geq 1}\frac{X_j}{w^j}$.
		\end{itemize}
		Note that, in particular, $X_1 = t^{-\frac{1}{2n}}Y_1$ (we will use it in a moment). Moreover, by applying the small norm theorem (see for instance Theorem 5.1.5 in \cite{its2011large}) we can conclude that $X(w)\sim \1_{k+1}$ for $t\rightarrow+\infty$ and fixed finite $x_j$ and, in particular, $Y_j \to 0$ for $t \to + \infty$ (we already used this result in Section 3).\\
		
		On the other hand, because of its properties described above, $X$ satisfies the integral equation
		\begin{equation}
			X(w) = \1_{k+1} - \int_{\tilde{\Gamma}_+\cup\tilde{\Gamma}_-}X_-(v)\frac{f_n(vt^{\frac{1}{2n}}) g_n^{\top}(vt^{\frac{1}{2n}})}{v-w}dv.
		\end{equation}
		Expanding the right hand side for $w \to \infty$ we find
		\begin{equation}
			X_1 =  \int_{\tilde{\Gamma}_+\cup\tilde{\Gamma}_-}X_-(v)
			f_n( vt^{\frac{1}{2n}})g_n^{\top}(v t^{\frac{1}{2n}}) dv,
		\end{equation}
		and thus we can conclude that, for $t\rightarrow + \infty$ and fixed $x_j$, 
		\begin{equation}
			u_j(t;\vec{x}, \vec{\alpha}) =(Y_1)_{1,j+1}= t^{\frac{1}{2n}}(X_1)_{1,j+1}\sim  \frac{\sqrt{\alpha_j - \alpha_{j+1}}}{2\pi \i}\int_{\Gamma_+}\e ^ {\i\left(\frac{z^{2n+1}}{2n+1}+(t+x_j)z\right)}dz = \sqrt{\alpha_j - \alpha_{j+1}}\mathrm{Ai}_n(t+x_j)
		\end{equation}
		for any $j = 1, \ldots, k$, where in the last two equalities we used the small norm theorem and the integral representation of the $n$-th Airy function.
		Finally, the equation $\eqref{formula for fred det}$ is obtained integrating twice \eqref{second log der}, and checking that one can set the integration constants equal to zero because of the asymptotic behavior of $\vec u$ as $t \to +\infty$. 
	\end{proof}
	
\subsection*{Acknowledgements}

The authors are grateful to Thomas Bothner for the many useful discussions on the vector-valued Painlevé II hierarchy and its relation with the integro-differential one. We acknowledge the support of the H2020-MSCA-RISE-2017 PROJECT No. 778010 IPaDEGAN and the International Research Project PIICQ, funded by CNRS. S. T. was also supported by the Fonds de la Recherche Scientifique-FNRS under EOS
project O013018F.

\def\cprime{$'$}


\begin{thebibliography}{10}

\bibitem{BaikDeiftRains}
J.~Baik, P.~Deift, and E.~Rains.
\newblock {A Fredholm Determinant Identity and the Convergence of Moments for
  Random Young Tableaux}.
\newblock {\em Communications in Mathematical Physics}, 223(3):627--672, nov
  2001.

\bibitem{BertolaCafasso1}
M.~Bertola and M.~Cafasso.
\newblock {T}he {R}iemann-{H}ilbert approach to the transition between the gap
  probabilities from the {P}earcey to the {A}iry process.
\newblock {\em International Mathematics Research Notices}, 2012(7):1519--1568,
  2012.

\bibitem{BeteaBouttierWalsh}
D.~Betea, J.~Bouttier, and H.~Walsh.
\newblock Multicritical random partitions.
\newblock {\em arXiv: 2012.01995}.

\bibitem{CafassoClaeysGirotti}
M.~Cafasso, T.~Claeys, and M.~Girotti.
\newblock {Fredholm determinant solutions of the Painlev\'e II hierarchy and
  gap probabilities of determinantal point processes}.
\newblock {\em Int. Math. Res. Not.}, 4:2437--2478, 2021.

\bibitem{ClaeysDoeraene}
T.~Claeys and A.~Doeraene.
\newblock {The Generating Function for the Airy Point Process and a System of
  Coupled Painlev\'e II Equations}.
\newblock {\em Stud. Appl. Math.}, 140(403-437), 2018.

\bibitem{its2011large}
A.~R. Its.
\newblock {Large N Asymptotics in Random Matrices}.
\newblock In {\em {Random Matrices, Random Processes and Integrable Systems}},
  pages 351--413. Springer New York, 2011.

\bibitem{IIKS}
A.~R. Its, A.~G. Izergin, V.~E. Korepin, and N.~A. Slavnov.
\newblock Differential equations for quantum correlation functions.
\newblock In {\em Proceedings of the {C}onference on {Y}ang-{B}axter
  {E}quations, {C}onformal {I}nvariance and {I}ntegrability in {S}tatistical
  {M}echanics and {F}ield {T}heory}, volume~4, pages 1003--1037, 1990.

\bibitem{KimuraZahabi2}
T.~Kimura and A.~Zahabi.
\newblock {Unitary matrix models and random partitions: universality and
  multi-criticality}.
\newblock {\em J. High Energy Phys.}, 7, 2021.

\bibitem{KimuraZahabi}
T.~Kimura and A.~Zahabi.
\newblock {Universal edge scaling in random partitions}.
\newblock {\em Lett. Math. Phys.}, 111, 2021.

\bibitem{Kraj}
A.~Krajenbrink.
\newblock {From Painlev\'e to Zakharov-Shabat and beyond: Fredholm determinants
  and integro-differential hierarchies}.
\newblock {\em Journal of Physics A: Mathematical and General}, 54, 2021.

\bibitem{FermMM}
P.~Le~Doussal, M.~D. Majumdar, and G.~Schehr.
\newblock {Multicritical edge statistics for the momenta of fermions in
  non-harmonic traps}.
\newblock {\em Phys. Rev. Lett.}, 2018.

\bibitem{Sosh:RandomPointFields}
A.~Soshnikov.
\newblock Determinantal random point fields.
\newblock {\em Uspekhi Mat. Nauk}, 55(5(335)):107--160, 2000.

\bibitem{BothnerCafassoTarricone}
{Thomas Bothner and Mattia Cafasso and Sofia Tarricone}.
\newblock {Momenta spacing distributions in anharmonic oscillators and the
  higher order finite temperature Airy kernel}.
\newblock {\em Ann. Inst. Henri Poincar{\'e} Probab. Stat.}, (in press).

\bibitem{WE}{Warren, Oliver H. and Elgin, John N.}.
\newblock{The vector nonlinear {S}chr\"{o}dinger hierarchy},
\newblock{\em Physica D. Nonlinear Phenomena}, 228 (2) : 166--171, 2007.

\end{thebibliography}

\end{document}